\newtheorem{theorem}{Theorem}
\newcommand{\newatop}[2]{\genfrac{}{}{0pt}{}{#1}{#2}}
\newlength{\pecettawidth}
\begin{document}
%%% Font: commenta la riga che segue se vuoi i font standard
%\fontfamily{ppl}\selectfont
%\logo
\title{Phase transitions in random mixtures of elementary cellular automata}

\author{Emilio N.M.\ Cirillo}
\email{emilio.cirillo@uniroma1.it}
\affiliation{Dipartimento di Scienze di Base e Applicate per l'Ingegneria, 
             Sapienza Universit\`a di Roma, 
             via A.\ Scarpa 16, I--00161, Roma, Italy.}
%\thanks{ENMC acknowledges Eurandom for the kind hospitality.}

\author{Francesca R.\ Nardi}
\affiliation{Dipartimento di Matematica e Informatica ``Ulisse Dini'', 
viale Morgagni 67/a, 50134, Firenze, Italy.}
\email{francescaromana.nardi@unifi.it}

\author{Cristian Spitoni}
\affiliation{Institute of Mathematics,
University of Utrecht, Budapestlaan 6, 3584 CD Utrecht, The~Netherlands.}
\email{C.Spitoni@uu.nl}

%\thanks{The authors thanks....}

\begin{abstract}
We investigate one--dimensional Probabilistic Cellular Automata, 
called Diploid Elementary Cellular Automata (DECA), obtained as 
random mixture of two different Elementary Cellular Automata rules.
All the cells are updated synchronously and 
the probability for one cell to be $0$ or $1$ at time $t$ depends 
only on the value of the same cell and that of its neighbors at time $t-1$.
These very simple models show a very rich behavior strongly 
depending on the choice of the two Elementary Cellular Automata 
that are randomly mixed together and on the parameter which 
governs probabilistically the mixture. 
In particular, 
we study the existence of phase transition for the whole set of possible 
DECA obtained by mixing the null rule which associates $0$ to any 
possible local configuration, with any of the other $255$ 
elementary rule.
We approach the problem analytically via a Mean Field 
approximation and via the use of a rigorous approach based 
on the application of the Dobrushin Criterion. 
The distinguishing trait of our result is the 
possibility to describe the behavior of the whole set 
of considered DECA without exploiting the local properties 
of the individual models. 
The results that we find are coherent with numerical 
studies already published in the scientific literature and 
also with some rigorous results proven for some specific models. 
\end{abstract}

%\pacs{}

\keywords{Probabilistic Cellular Automata; Synchronization; Stationary 
measures; First hitting times; Mean field.}

%\ams{}

%\preprint{Appunti: \today}  

\vfill\eject
%\noindent
%\textbf{MSC2000:} 82B28; 82B44; 60K35.

\maketitle

\section{Introduction}
\label{s:intro} 
\par\noindent

Probabilistic Cellular Automata (PCA) generalize deterministic 
Cellular Automata (CA) as discrete--time Markov chains. 
Despite the simplicity of their stochastic evolution rules,  
PCA exhibit a large variety of dynamical behaviors and 
for this reason are powerful modeling tools (see \cite{LN2016} for 
a general introduction to the topic).
In this paper 
we study the  relaxation towards stationarity of a family of  
one--dimensional PCA, 
called \emph{Diploid} Elementary Cellular Automata (DECA),
which are defined as Bernoulli mixtures 
of two Elementary Cellular Automata (ECA) rules \cite{W83,W84}.
DECA have originally introduced and analyzed in \cite{Fautomata2017} 
by means of numerical simulations.

By varying the ECA chosen in the mixture, the class of DECA considered 
in the present manuscript is indeed very rich an includes among the others: 
the \emph{percolation} PCA studied in  \cite{BMM2013} and \cite{T2004}, 
the  \emph{noisy additive} PCA \cite{MM2014}, the \emph{Stavskaya's} PCA \cite{Me2011}
and the \emph{directed animals} PCA \cite{Dh83}.

The long--time limit of the PCA has been the subject of many 
numerical and theoretical results in the last fifty years,  
see for instance \cite{T1990,T2004}. 
In this paper we focus on the properties of DECA's stationary states 
in function of the parameter $\lambda$ governing the Bernoulli mixture. 
In particular, we study the presence of \emph{phase transitions} associated 
to multiple invariant measures. 

In case of uniqueness of the invariant 
measure, a natural questions are related to attractiveness and
ergodicity of the system \cite{T2004}.
However, ergodicity will not be the focus of this paper and we just recall 
that the uniqueness of the invariant measure does not imply 
ergodicity \cite{CM2010,JK2015}. 
We will be interested instead to
the structure of the \emph{phase diagram} of the DECA in relation to 
the mixing parameter $\lambda$ and to the
choice of the two mixed ECA.

A rigorous study of the phase diagram of DECA is possible only 
for a tiny subset of the ECA rules. For this reason, we thus use 
a \emph{Mean Field} (MF) approximation \cite{Gv87,M90} to get a wide 
overview of the possible behaviors of all the possible DECA. 
The MF approximation assumes that at a given time the values of the cells 
are independent and not correlated with each other. Thus, the joint 
probability of the
neighborhood state is a product of the single--site probabilities. 
Therefore, a polynomial  on these single--site probabilities is derived
and its curve can be used to classify the DECA, in terms of the presence 
of phase  transition \cite{M90}. By the MF approximation we are able to explain the presence of the phase transitions suggested by the simulations in 
\cite{Fautomata2017}.  

Moreover, we can provide rigorous lower bounds for the critical point,
by using a Dobrushin single site sufficient condition 
\cite{Dpit1971}, stated in the case 
of PCA and extended in \cite{MScmp1991}. This \emph{Dobrushin criterion}
provides an instrument to prove ergodicity, and hence  
existence of a unique invariant measure, to be compared with the results 
of the MF approximation.

A third contribution of the present paper is the description of the 
relaxation towards stationarity in the finite volume regime. By looking at 
the DECA from the perspective of a \emph{finite volume} Markov chain, we 
show that for any finite size $n$ of the chain and for the mixing 
parameter $\lambda$ large enough, the system has essentially two time scales, 
sharing some features with PCA which exhibit metastable states
\cite{CNS2016}. 
On a small time scale, the chain seems to be frozen in a non--null 
stationary state (i.e., with a non--null asymptotic density), while on 
an exponentially larger time the system relaxes abruptly to the 
unique stationary configuration with zero density. 

The paper is organized as follows. In Section~\ref{s:modello} we define 
the class of DECA of interest and recall the findings of \cite{Fautomata2017}. 
In Section~\ref{s:spiegazione} we introduce the MF model and prove the 
uniqueness of the invariant measure in case of \emph{odd} models and the 
presence of a phase transition for a subset of \emph{even} models, 
for $\lambda$ large enough. In Section~\ref{s:infinite} we find a
lower bound for the critical parameter $\lambda_\textup{c}$ by 
using a \emph{Dobrushin criterion} and we prove that  
for $\lambda<1/3$ the Dobrushin criterion ensures that the invariant 
measure is unique in the infinite volume case and coincides with the 
delta measure in $0$. Furthermore, according to the number 
of \emph{marginal cells} of the neighborhood of the local rule, 
we improve this lower bound for a subclass of models. 
In Section~\ref{s:finito} we consider the DECA in finite volume.
In this regime we prove the convergence of the system towards 
the stationary state $0$ with probability one. Moreover, we show 
a behavior resembling metastability, namely, the persistence in a non--null 
state for an exponentially long time before an abrupt transition 
towards the state $0$.

\section{Phase transitions in diploid elementary cellular automata}
\label{s:modello} 
\par\noindent
A finite cellular automata with binary states and periodic boundary 
condition is defined considering a \emph{set of states} $Q=\{0,1\}$ and a 
\emph{ring} $\mathbb{L}_n=\mathbb{Z}/n\mathbb{Z}$ made of $n$ cells. 
The \emph{configuration space} is 
$X_n=Q^{\mathbb{L}_n}$.
For $x\in X_n$, $x_i$ is called \emph{value} of the cell $i$ or 
\emph{occupation number} of the cell $i$. 
The configuration with all the cell states equal to zero will be 
simply 
denoted by $0$ and, similarly, the one with all the cell states 
equal to $1$ will be denoted by $1$. 

In elementary cellular automata
all cells are updated synchronously  
so that the state of each cell is updated 
according to the state of the cell itself and to that 
of the two neighboring cells. 
The set of these three cells will be called \emph{neighborhood} 
of a given cell. 
More precisely, given a \emph{local rule} 
$f:Q^3\to Q$, 
we denote by $F:X_n\to X_n$ the map defined by letting 
\begin{displaymath}
(F(x))_i
=
f(x_{i-1}, x_{i}, x_{i+1})
\end{displaymath}
for any $i\in\mathbb{L}_n$.
The \emph{Elementary Cellular Automata} (ECA) associated with the 
local rule $f$ is the collection
of all the sequences of configurations 
$(x^t)_{t\in\mathbb{N}}$ obtained by applying the map $F$ iteratively, 
namely, such that 
$x^t=F(x^{t-1})$.
The particular sequence $(x^t)_{t\in\mathbb{N}}$ 
such that $x^0=x\in X_n$ is called \emph{trajectory} 
of the cellular automaton associated 
with the \emph{initial condition} $x$. 

Each of the possible  
$256$ local rules $f$ is identified by the integer number 
$W\in\{0,\dots,255\}$ 
such that 
\begin{equation}
\label{fin000}
\begin{array}{rcl}
W
&\!\!=&\!\!
f(1,1,1)\cdot2^7
+
f(1,1,0)\cdot2^6
+
f(1,0,1)\cdot2^5
+
f(1,0,0)\cdot2^4
\\
&&\!\!
{\displaystyle
+
f(0,1,1)\cdot2^3
+
f(0,1,0)\cdot2^2
+
f(0,0,1)\cdot2^1
+
f(0,0,0)\cdot2^0
=\sum_{i=0}^7 c_i\cdot 2^i
,
}
\end{array}
\end{equation}
where the last equality defines the coefficients $c_i$,
see Figure~\ref{f:coeff}.
The collection of the digits $c_7c_6c_5c_4c_3c_2c_1c_0$
is the binary representation of the number $W$. 
We shall often denote the ECA 
with both the decimal and the binary representation, 
namely, we shall write 
$W(c_7c_6c_5c_4c_3c_2c_1c_0)$.
Note that all the rules represented by an even number 
associated to the local configuration $000$ the states $0$. 

\begin{figure}
\begin{center}
\begin{tikzpicture}[b/.style={draw, minimum size=3mm,   
       fill=black},w/.style={draw, minimum size=3mm},
       m/.style={matrix of nodes, column sep=1pt, row sep=1pt, draw, label=below:#1}, node distance=1pt]

\matrix (A) [m=$c_7$]{
|[b]|&|[b]|&|[b]|\\
&\\
};
\matrix (B) [m=$c_6$, right=of A]{
|[b]|&|[b]|&|[w]|\\
&\\
};
\matrix (C) [m=$c_5$, right=of B]{
|[b]|&|[w]|&|[b]|\\
&\\
};
\matrix (D) [m=$c_4$, right=of C]{
|[b]|&|[w]|&|[w]|\\
&\\
};
\matrix (E) [m=$c_3$, right=of D]{
|[w]|&|[b]|&|[b]|\\
&\\
};
\matrix (F) [m=$c_2$, right=of E]{
|[w]|&|[b]|&|[w]|\\
&\\
};
\matrix (G) [m=$c_1$, right=of F]{
|[w]|&|[w]|&|[b]|\\
&\\
};
\matrix (H) [m=$c_0$, right=of G]{
|[w]|&|[w]|&|[w]|\\
&\\
};
\end{tikzpicture}
\end{center}
\vskip -0.5 cm
\caption{Schematic representation of the coefficients $c_i$: 
$c_i$ is equal to one if the cell value corresponding to the associated 
configuration of the local neighborhood is $1$; otherwise it is zero. 
In the picture, black squares represents ones and empty squares 
represent zeroes.}
\label{f:coeff}
\end{figure}

Some examples.\ 
The rule $0$ is called the \emph{null} rule 
and associates the state $0$ to any configuration 
in the neighborhood. 
The rule $22(00010110)$ 
associates the state $0$ to any configuration 
in the neighborhood 
but for the three local configurations in which one single $1$ is present 
in the neighborhood ($001$, $010$, and $100$) to which it associates $1$. 
The rule $150(10010110)$ 
associates the state $0$ to any configuration 
in the neighborhood 
but for the four local configurations in which an odd number of $1$'s 
is present in the neighborhood ($001$, $010$, $100$, and $111$) 
to which it associates $1$. 
The rule $204(11001100)$ is called the \emph{identity} 
and associates to any configuration 
in the neighborhood the state of the cell at the center (namely, the 
cell that one is going to update). 
The rule $224(11100000)$ 
associates the state $1$ to any configuration 
in the neighborhood 
but for the local configuration $000$ to which it associates $0$. 
The rule $232(11101000)$ is called the \emph{majority rule} 
and associates to any configuration 
in the neighborhood the majority state, namely $0$ to 
$000$, $001$, $010$, and $100$ and $1$ to the others.
The rule $255(11111111)$ 
associates the state $1$ to any configuration 
in the neighborhood. 

In this context a Probabilistic Cellular Automata, 
called \emph{probabilistic} or \emph{stochastic} ECA, 
is a Markov chain $(\xi^t)_{t\in\mathbb{N}}$ on the 
configuration space $X_n$ with transition matrix 
\begin{equation}
\label{mod000}
p(x,y)
=
\prod_{i\in\mathbb{L}_n}
p_i(y_i|x)
\;\textrm{ with }\;
p_i(y_i|x)
=
y_i\phi(x_{i-1},x_i,x_{i+1})
+
(1-y_i)[1-\phi(x_{i-1},x_i,x_{i+1})]
\end{equation}
where $\phi:Q^3\to[0,1]$ has to be interpreted as the probability 
to set the cell to $1$ given the neighborhood 
$x_{i-1}x_ix_{i+1}$ and, similarly, 
$1-\phi$ the probability to select $0$.
We denote by $P_x$ the probability associated with the 
process started at $x\in X_n$. 
We shall denote by $\mu^x_t(y)=P_x(\xi^t=y)$ the probability that the 
chain started at $x$ will be in the configuration $y$ at time 
$t$. 
Abusing the notation, 
$\mu^x_t(Y)=P_x(\xi^t\in Y)$ will denote the probability that the 
chain started at $x$ will be in the set of configurations $Y\subset X_n$ 
at time $t$. 

An important class of stochastic ECA is made of those models 
obtained by randomly mixing two of the $256$ elementary cellular 
automata. More precisely, given $\lambda\in(0,1)$ and 
picked two local rules $f_1\neq f_2$, the stochastic ECA 
defined by 
\begin{equation}
\label{mod010}
\phi=(1-\lambda) f_1+\lambda f_2
\end{equation} 
is called a \emph{Diploid} ECA (DECA). 
Note that in the limiting cases $\lambda=0,1$ or $f_1=f_2$ a
(deterministic) ECA is recovered. 

It is important to note that the time evolution of the diploid ECA 
can be described as follows: at time $t$ for each 
cell $i\in\mathbb{L}_n$ one chooses either the rule $f_1$ with 
probability $1-\lambda$ or the rule $f_2$ with probability 
$\lambda$ and performs the updating based on the neighborhood configuration 
at time $t-1$. 
Indeed, with this algorithm the 
probability to set the cell to $1$ a time $t$ is 
$0$ if $f_1=0$ and $f_2=0$ (where the local rules are computed 
in the neighborhood configuration at time $t-1$), 
$1-\lambda$ if $f_1=1$ and $f_2=0$, 
$\lambda$ if $f_1=0$ and $f_2=1$, 
$1$ if $f_1=1$ and $f_2=1$, which is coherent with the definition 
\eqref{mod010}

In the following we shall consider the case 
in which $f_1$ is the null rule (i.e., ECA 0) 
and $f_2$ is any other rule; those diploid elementary cellular 
automata will be called NDECA. 
In order to further simplify the exposition, we will call NDECA $n$ the NDECA with $f_2$ the ECA $n$.
We note that
the measure concentrated on the zero configuration $0$ is an
invariant measure for the finite volume NDECA in case in which the 
$f_2$ rule is even.

In this framework the main question is to understand if in the 
infinite volume limit, namely, $n\to\infty$, a different 
stationary measure exists, with a positive value of the 
average cell occupation number. 

This problem has been extensively studied in 
\cite{Fautomata2017} via numerical simulations: 
the diploid is started at
an initial configuration 
$x\in X_n$ 
in which cells are populated 
with zeros or ones with equal probability. 
For the chain $\xi^t$ the quantity 
$P_x(\xi_i^t=1)$
is the average value of the cell $i$ at time $t$; 
its spatial average 
\begin{equation}
\label{e:density}
\delta_x(t)=\frac{1}{n}\sum_{i\in \mathbb{L}_n } P_x(\xi_i^t=1)
\end{equation}
is called \emph{density} and represents the quantity of 
interest in these simulations. 
In particular, NDECA with 
$n=10^4$ cells have been extensively simulated 
for the time $T=5\cdot10^3$;
the fraction of cells set to $1$ measured in the final configuration 
has been considered as the stationary measure of the density.  
Clearly, whether or not  this number is an estimate of the averaged density 
along 
an infinite long run of the diploid in the infinite volume limit 
$n\to\infty$, it will depends on the infinite volume ergodic properties 
of the chain. 
The measure is repeated for any choice of the elementary 
rule $f_2$ and for many different 
choices of the mixing rate $\lambda\in(0,1)$. 
As reported in 
\cite[Table~1 and Figure~1]{Fautomata2017},
if the rule $f_2$ is chosen from the list 
\begin{eqnarray*}
\mathcal{F}
&\!\!\!\!\!\!\!\!=&
\!\!\!\!\!\!\!\!
\{
18, 22, 26, 28, 30, 50, 54, 58, 60, 62, 78, 90, 94, 110,
122, 126, 146, 150, 154, 156, 158, \\
&\phantom{=\{}&
\!\!\!\!\!
178, 182, 186, 188,
190, 202, 206, 218, 220, 234,238, 250, 254\}
\end{eqnarray*}
a continuous transition is observed, in the sense that 
the measured stationary density is equal to zero 
for $\lambda\in(0,\lambda_\textup{c})$ and 
is \emph{continuously} monotonically growing for $\lambda\ge\lambda_\textup{c}$. 
The critical value $\lambda_\textup{c}$ is close to $0.7$ but it seems to 
depend on the choice of the rule $f_2$,  
see Figures~\ref{f:18} and \ref{f:254}. 

These results are partially explained in the following sections by means of a 
MF approximation and by using rigorous arguments based in 
the Dobrushin single site condition.

Our general analysis will cover models well known in the literature, whose asymptotic behavior is studied rigorously and/or  numerically. We will
consider indeed \emph{directed animals PCA} (NDECA 17), \emph{diffusion PCA} (NDECA 18),  \emph{noisy additive PCA} (NDECA 102), 
the \emph{Stavskaya model} (NDECA 238), and the \emph{percolation PCA} (NDECA 254).

\section{Mean field approximation}
\label{s:spiegazione} 
\par\noindent
We derive a \emph{Mean Field} (MF) approximation of 
the stationary density of any NDECA and find results 
consistent with the numerical predictions in \cite{Fautomata2017}.

Since $f_1$ is the null rule, from 
(\ref{mod000}) and (\ref{mod010}), we have that, 
for any $i=1,\dots.n$, 
$p_i(1|y)=\lambda \textbf{1}\{f_2(y_{i-1}y_iy_{i+1})=1\}$.
Thus, 
\begin{equation}
\label{eq:it010}
\begin{array}{rcl}
P_x(\xi_i^t=1) 
&\!\!=&\!\!
{\displaystyle
\sum_{y\in X_n}P_x(\xi_i^t=1|\xi^{t-1}=y)  {P}_x( \xi^{t-1}=y) 
}
\\
&\!\!=&\!\!
{\displaystyle
\sum_{y\in X_n} p_i(1|\xi^{t-1}=y)  P_x(\xi^{t-1}=y) 
}
\\
&\!\!=&\!\!
{\displaystyle
\lambda\sum_{y\in X_n} \textbf{1}\left\{f(y_{i-1},y_i,y_{i+1})=1 
        \right\} P_x(\xi^{t-1}=y) 
}
\\
&\!\!=&\!\!
{\displaystyle
\lambda \sum_{z\in f_2^{-1}(1)} 
P_x((\xi^{t-1}_{i-1},\xi^{t-1}_i,\xi^{t-1}_{i+1})=z)
}
,
\end{array}
\end{equation}
where, as usual, $f_2^{-1}(1)$ is the counter image of $1$ under $f_2$, 
namely, the set of neighbors (i.e., triples) mapped to one by the local 
rule $f_2$.

Considering a MF approximation, here, 
means approximating $P_x((\xi^{t-1}_{i-1},\xi^{t-1}_i,\xi^{t-1}_{i+1})=z)$ 
with the product 
$P_x((\xi^{t-1}_{i-1}=z_1)P_x((\xi^{t-1}_{i-1}=z_2)P_x((\xi^{t-1}_{i-1}=z_3)$, 
where $z=(z_1,z_2,z_3)$.
Thus, if we let $a_i(t;x)$ to be the MF approximation of the probability 
that the value of the cell $i$ is one at time $t$, from 
\eqref{eq:it010} we get 
\begin{equation}
\label{e:mean_general}
a_i(t;x)
=
\lambda 
\sum_{(z_1,z_2,z_3)\in f_2^{-1}(1)}
\prod_{k=1}^3
[z_k a_{i-2+k}(t-1;x)+(1-z_k)(1-a_{i-2+k}(t-1;x))]
,
\end{equation}
which is the MF iterative equation for the occupation probability.

Starting from an homogeneous initial configuration $x$, which is 
the case in the simulations performed in \cite{Fautomata2017}, 
the MF iterations \eqref{e:mean_general} preserve such a 
homogeneity character. 
Thus, we seek for the NDECA phases by looking for homogeneous 
stationary (not dependent on time) solution $a$ of the 
MF equation \eqref{e:mean_general}, that is to say, we consider the 
equation 
\begin{equation}
\label{e:MF_stat}
a
=
\lambda 
\sum_{(z_1,z_2,z_3)\in f_2^{-1}(1)}
\prod_{k=1}^3
[z_k a+(1-z_k)(1-a)]
.
\end{equation}

If the ECA $f_2$ is represented by the binary sequence of digits 
$c_7 c_6 c_5 c_4 c_3 c_2 c_1 c_0$, see \eqref{fin000} and 
Figure~\ref{f:coeff}, 
then \eqref{e:MF_stat} becomes
\begin{displaymath}
a
=
\lambda 
[c_7 a^3
 +(c_6+c_5+c_3)a^2(1-a)
 +(c_4+c_2+c_1)a(1-a)^2
 +c_0(1-a)^3]
,
\end{displaymath}
which can be rewritten as 
\begin{equation}
\label{e:a_general}
a
=
\lambda [(c_7-S_2+S_1-c_0)a^3+(S_2-2S_1+3c_0)a^2+(S_1-3c_0)a+c_0]
,
\end{equation}
where $S_2=c_6+c_5+c_3$ 
is the number of configurations in $f_2^{-1}(1)$ in which only two cells 
have value one 
and 
$S_1=c_4+c_2+c_1$ 
is the number of configurations in $f_2^{-1}(1)$ in which one single cell 
has value one.

\setlength{\unitlength}{1.3pt}
\begin{figure}[t]
\begin{picture}(400,70)(-120,-20)
\thinlines
%Sinistra
\put(-30,0){\vector(1,0){70}}
\put(0,-30){\vector(0,1){70}}
\put(41,0){${\scriptstyle a}$}
\put(1,40){${\scriptstyle q(a)}$}
\put(0,20){\circle*{2}}
\put(20,-10){\circle*{2}}
\qbezier(-20,-20)(-10,50)(0,20)
\qbezier(0,20)(15,-25)(20,-10)
\qbezier(20,-10)(25,5)(30,40)
%Destra
\put(70,0){\vector(1,0){70}}
\put(100,-30){\vector(0,1){70}}
\put(141,0){${\scriptstyle a}$}
\put(101,40){${\scriptstyle q(a)}$}
\put(100,20){\circle*{2}}
\put(120,-10){\circle*{2}}
\qbezier(80,-20)(85,50)(100,20)
\qbezier(100,20)(115,-10)(127,-15)
\qbezier(127,-15)(135,-17)(140,40)
\end{picture}
\vskip .5 cm
\caption{Two possible graphs of the cubic $q(a)$ in the case 
$c_7-S_2+S_1-1>0$.}
\label{fig:odd}
\end{figure}
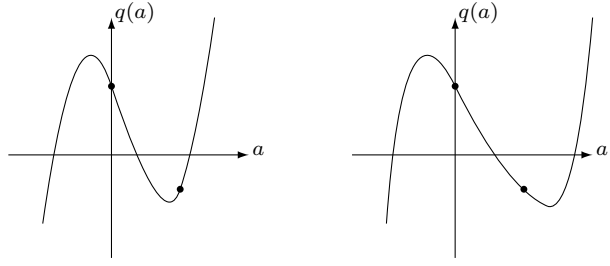

\subsection{Odd NDECA}
\label{s:odd} 
\par\noindent
We say that a NDECA is odd if $c_0=1$:
the ECA $f_2$ maps the configuration $000$ to one. 
In this case we prove that 
\eqref{e:a_general} has a unique solution in $[0,1]$.

We first note that in this case the equation \eqref{e:a_general} can 
be rewritten as 
\begin{equation}
\label{e:odd000}
q(a)\equiv
(c_7-S_2+S_1-1)a^3+(S_2-2S_1+3)a^2
+\Big(S_1-3-\frac{1}{\lambda}\Big)a+1
=0
\end{equation}
and compute $q(0)=1$ and $q(1)=c_7-1/\lambda<0$.

\medskip
\par\noindent
\textit{Case} $c_7-S_2+S_1-1=0$: 
the polynomial $q(x)$ has degree equal to one or two. 
In both cases, since its graph in the plane $a$--$q(a)$
has to pass through the points $(0,1)$ and 
$(1,c_7-1/\lambda)$, with $c_7-1/\lambda<0$, we have that 
such a graph intersects the segment $[0,1]$ in one single  
point. 

\medskip
\par\noindent
\textit{Case} $c_7-S_2+S_1-1>0$: using that 
$\lim_{a\to\pm\infty}q(a)=\pm\infty$
and, again,  
the fact that 
the graph of the cubic polynomial $q(a)$ in the plane $a$--$q(a)$
has to pass through the points $(0,1)$ and 
$(1,c_7-1/\lambda)$, with $c_7-1/\lambda<0$, we have that 
such a graph intersects the segment $[0,1]$ in one single  
point (see Figure~\ref{fig:odd}, where two possible 
situations are depicted). 

\medskip
\par\noindent
\textit{Case} $c_7-S_2+S_1-1<0$: we note that
$\lim_{a\to\pm\infty}q(a)=\mp\infty$, 
recall $q(0)=1$ and $q(1)=c_7-1/\lambda<0$,
and
compute 
$q'(a)=3(c_7-S_2+S_1-1)a^2+2(S_2-2S_1+3)a+(S_1-3-1/\lambda)$.

If $c_7=0$, the reduced discriminant 
$\Delta/4=-3S_2+S_2^2-S_2S_1+S_1^2+3(-S_2+S_1-1)/\lambda$
of the equation 
$q'(a)=0$ is negative since
the condition $-S_2+S_1-1<0$ implies $S_1\le S_2\le3$.
Thus, $q'(a)$ is negative and hence
the graph of $q(a)$ intersects the segment $[0,1]$ in one single  
point.

If $c_7=1$, 
the condition $-S_2+S_1<0$ implies $S_2\ge S_1+1$.
We thus compute the reduced discriminant for all the possible 
cases 
$(S_1,S_2)=(0,1),(0,2),(0,3),(1,2),(1,3),(2,3)$
and, respectively, find
$\Delta/4=7-3/\lambda,7-6/\lambda,9-9/\lambda,
3-3/\lambda,4-6/\lambda,1-3/\lambda$.
In the last four cases the reduced discriminant is negative, 
thus, $q'(a)$ is negative and hence
the graph of $q(a)$ intersects the segment $[0,1]$ in one single  
point.

We are left with two cases for which we compute explicitly 
the solutions of the equation $q'(a)=0$.
In the case $S_1=0$ and $S_2=1$ we find 
$a_\pm=(4\pm\sqrt{7-3/\lambda})/3$: when the 
two solutions are real we have 
$0<a_-<1$
and 
$a_+>1$, hence 
the graph of $q(a)$ intersects the segment $[0,1]$ in one single point.
In the case $S_1=0$ and $S_2=2$ we find 
$a_\pm=(5\pm\sqrt{7-6/\lambda})/6$: when the 
two solutions are real they are such that
$0<a_\pm<1$, but the value of the function at the maximum 
point is negative, namely, $q(a_+)<0$. Thus,
the graph of $q(a)$ intersects the segment $[0,1]$ in one single point.

\begin{table}
\begin{center}
\begin{tabular}{c|c|c|c|c|r}
\hline\hline
$c_7$ & $S_2$ & $S_1$ & $\lambda^*$ & $a^*(\lambda)$ & decimal and binary code 
\\
\hline\hline
0 & 2 & 2 & 
$\frac{1}{2}$ &
$1-\frac{1}{2\lambda}$ &
46(00101110),
58[00111010],
60[00111100],
\\
&&&&&
78[01001110],
90[01011010],
92(01011100),
\\
&&&&&
102(01100110),
114(01110010),
\\
&&&&&
116(01110100)
\\
\hline
0 & 3 & 3 & 
$\frac{1}{3}$ &
$1-\frac{1}{3\lambda}$ &
126[01111110]
\\
\hline
1 & 3 & 2 & 
$\frac{1}{2}$ &
$2-\frac{1}{\lambda}$ &
238[11101110],
250[11111010],
252(11111100)
\\
\hline\hline
\end{tabular}
\end{center}
\caption{Case $c_7-S_2+S_1=0$: list of models for which the MF equation 
suggests existence of phase transition.
Values of $c_7$, $S_2$, and $S_1$ in the first three columns, 
critical value $\lambda^*$, 
not trivial solution $a^*(\lambda)$ (order parameter) 
of the MF equation 
\eqref{e:a_general},
decimal and binary representation of the code in the last column.
The decimal code is in square bracket if the simulation 
in \cite{Fautomata2017} shows the phase transition. 
}
\label{t:g1}
\end{table}

\subsection{Even NDECA}
\label{s:even} 
\par\noindent
We say that a NDECA is even if $c_0=0$:
the ECA $f_2$ maps the configuration $000$ to zero. 
For some of the even NDECA the MF equation 
\eqref{e:a_general} has more than one solution if $\lambda$ 
is larger than a critical value $\lambda^*$, 
that is to say, in these 
cases the system exhibits a phase transition guided by the 
parameter $\lambda$. More precisely, the MF approximation 
predicts that the stationary density $a^*(\lambda)$ 
is the \emph{order parameter} describing 
this transition and is equal to zero for $\lambda<\lambda^*$ and 
positive for $\lambda>\lambda^*$.

\begin{table}
\begin{center}
\begin{tabular}{c|c|c|c|c|r}
\hline\hline
$c_7$ & $S_2$ & $S_1$ & $\lambda^*$ & $a^*(\lambda)$ & decimal and binary code 
\\
\hline\hline
0 & 0 & 2 & 
$\frac{1}{2}$ &
$1-\sqrt{\frac{1}{2\lambda}}$ &
6[00000110], 
18[00010010], 
20(00010100)
\\
\hline
0 & 0 & 3 & 
$\frac{1}{3}$ &
$1-\sqrt{\frac{1}{3\lambda}}$ &
22[00010110]
\\
\hline
0 & 1 & 2 & 
$\frac{1}{2}$ &
$\frac{3}{2}-\sqrt{\frac{1}{4}+\frac{1}{\lambda}}$ &
14(00001110),
26[00011010],
28[00011100],
\\
&&&&&
38(00100110),
50[00110010],
52(00110100),
\\
&&&&&
70(01000110),
82(01010010),
84(01010100)
\\
\hline
0 & 1 & 3 &
$\frac{1}{3}$ &
$\frac{5}{4}-\sqrt{\frac{1}{16}+\frac{1}{2\lambda}}$ &
30[00011110],
54[00110110],
86(01010110)
\\
\hline
0 & 2 & 3 & 
$\frac{1}{3}$ &
$2-\sqrt{1+\frac{1}{\lambda}}$ &
118(01110110),
94[01011110],
62[00111110]
\\
\hline
1 & 0 & 2 & 
$\frac{1}{2}$ &
$\frac{2}{3}-\sqrt{-\frac{2}{9}+\frac{1}{3\lambda}}$ &
134(10000110),
146[10010010],
148(1001010)
\\
\hline
1 & 0 & 3 & 
$\frac{1}{3}$ &
$\frac{3}{4}-\sqrt{-\frac{3}{16}+\frac{1}{4\lambda}}$ &
150[10010110]
\\
\hline
1 & 1 & 2 & 
$\frac{1}{2}$ &
$\frac{3}{4}-\sqrt{-\frac{7}{16}+\frac{1}{2\lambda}}$ &
142(10001110),
154[10011010],
156[10011100],
\\
&&&&&
166(10100110),
178[10110010],
180(10110100),
\\
&&&&&
198(11000110),
210(11010010),
212(11010100) 
\\
\hline
1 & 1 & 3 & 
$\frac{1}{3}$ &
$\frac{5}{6}-\sqrt{-\frac{11}{36}+\frac{1}{3\lambda}}$ &
158[10011110],
182[10110110],
214(11010110)
\\
\hline
1 & 2 & 2 & 
$\frac{1}{2}$ &
$1-\sqrt{-1+\frac{1}{\lambda}}$ &
174(10101110),
186[10111010],
188[10111100],
\\
&&&&&
206[11001110],
218[11011010],
220(11011100),
\\
&&&&&
230(11100110),
242(11110010),
244(11110100)
\\
\hline
1 & 2 & 3 & 
$\frac{1}{3}$ &
$1-\sqrt{-\frac{1}{2}+\frac{1}{2\lambda}}$ &
190[10111110],
222(11011110),
246(11110110)
\\
\hline
1 & 3 & 3 & 
$\frac{1}{3}$ &
$\frac{3}{2}-\sqrt{-\frac{3}{4}+\frac{1}{\lambda}}$ &
254[11111110] 
\\
\hline\hline
\end{tabular}
\end{center}
\caption{
As in Table~\ref{t:g1} for $c_7-S_2+S_1>0$.
}
\label{t:g2}
\end{table}

We first note that in this case the equation \eqref{e:a_general} can 
be rewritten as 
\begin{equation}
\label{e:even000}
\lambda p(a)\equiv
\lambda[
(c_7-S_2+S_1)a^3+(S_2-2S_1)a^2
+S_1a
]
=
a
\end{equation}
and compute 
$p(0)=0$,
$p(1)=c_7$, 
$p'(0)=S_1$,
and 
$p'(1)=3c_7-S_2$.

\medskip
\par\noindent
\textit{Case} $c_7-S_2+S_1=0$: 
if $S_2-2S_1>0$ the graph of the polynomial $\lambda p(a)$ is a convex parabola 
passing through $(0,0)$ and $(1,\lambda A)$; hence the 
equation \eqref{e:a_general} has the single solution $a=0$. 
If $S_2-2S_1=0$ the graph of the polynomial $\lambda p(a)$ is a straight 
line with slope $\lambda S_1$, 
hence the equation \eqref{e:a_general} has the single solution $a=0$. 
If $S_2-2S_1<0$ the graph of the polynomial $\lambda p(a)$ is a concave 
parabola 
passing through $(0,0)$ and $(1,\lambda A)$. 
Since $\lambda p'(a)=\lambda S_1$, the 
equation \eqref{e:a_general} has one more solution, 
besides $a=0$, provided $\lambda$ is large enough.  
The second solution appears continuously from $0$ and 
increases with $\lambda$. 
The NDECA satisfying these conditions are listed in Table~\ref{t:g1}.

\medskip
\par\noindent
\textit{Case} $c_7-S_2+S_1>0$: we note that
$\lim_{a\to\pm\infty}p(a)=\pm\infty$, 
and
recall $p(0)=0$, $\lambda p(1)=\lambda c_7$,
$p'(0)=S_2$.
The graph of the cubic polynomial $\lambda p(a)$ intersects 
the straight line $a$ for $\lambda$ sufficiently large 
if the derivative $p'(a)$ in $a=0$ is larger than $1$. Hence, 
the MF equation \eqref{e:a_general} 
has one more solution, 
besides $a=0$, provided $\lambda$ is large enough.  
The second solution appears continuously from $0$ and 
increases with $\lambda$. 
The NDECA satisfying these conditions are listed in Table~\ref{t:g2}.

\medskip
\par\noindent
\textit{Case} $c_7-S_2+S_1<0$: 
the MF equation \eqref{e:a_general} for the nine 
possible cases 
$(c_7,S_2,S_1)=
(0,1,0), (0,2,0), (0,2,1), (0,3,0), (0,3,1),
(0,3,2), (1,2,0), (1,3,0), (1,3,1)$
is solved and the 
NDECA for which a phase transition is found are listed 
in Table~\ref{t:g3}.

\begin{table}
\begin{center}
\begin{tabular}{c|c|c|c|c|r}
\hline\hline
$c_7$ & $S_2$ & $S_1$ & $\lambda^*$ & $a^*(\lambda)$ & decimal and binary code 
\\
\hline\hline
0 & 3 & 1 & 
$\frac{8}{9}$ &
$\frac{1}{4}+\sqrt{\frac{9}{16}-\frac{1}{2\lambda}}$ &
120(01111000),
108(01101100),
106(01101010)
\\
\hline
0 & 3 & 2 & 
$\frac{1}{2}$ &
$-\frac{1}{2}+\sqrt{\frac{9}{4}-\frac{1}{\lambda}}$ &
110[01101110],
122[01111010],
124(01111100)
\\
\hline
1 & 3 & 0 & 
$\frac{8}{9}$ &
$\frac{3}{4}+\sqrt{\frac{9}{16}-\frac{1}{2\lambda}}$ &
232(11101000)
\\
\hline
1 & 3 & 1 & 
$\frac{4}{5}$ &
$\frac{1}{2}+\sqrt{\frac{5}{4}-\frac{1}{\lambda}}$ &
234[11101010],
236(11101100),
248(11111000)
\\
\hline\hline
\end{tabular}
\end{center}
\caption{As in Table~\ref{t:g1} for $c_7-S_2+S_1<0$.
}
\label{t:g3}
\end{table}

\subsection{Discussion of MF results}
\label{s:dMF} 
\par\noindent
In Sections~\ref{s:odd} and \ref{s:even} and in Tables~\ref{t:g1}--\ref{t:g3}
we have provided a detailed study of the MF equation 
\eqref{e:a_general}.

We have proven that in the MF approximation odd NDECA do not 
exhibit phase transition, indeed, we have proven that equation 
\eqref{e:a_general} admits a single solution. 
This result is coherent with the numerical results discussed 
in \cite{Fautomata2017}.

In the even case, namely, $c_0=0$, the MF computation 
suggests that NDECA have to be classified through the 
parameters $c_7$, $S_2=c_6+c_5+c_3$, 
and $S_1=c_4+c_2+c_1$, where, we recall 
$S_2$ 
and 
$S_1$ 
count, respectively, 
the number of configurations in $f_2^{-1}(1)$ in which only two cells 
or only one single cell
have value one. 
Models belonging to those classes share the same behavior in the sense 
that either they all exhibit phase transition or not; moreover, 
in case of phase transition, they share both the 
critical point $\lambda^*$ and the 
order parameter $a^*(\lambda)$. 

The full 
list of rules for which the transition is found solving 
the MF equation is 
provided in Tables~\ref{t:g1}--\ref{t:g3}.
It is worth noting that the NDECA reported in 
Tables~\ref{t:g1} and \ref{t:g2}, 
namely, those for which 
$c_7-S_2+S_1\ge0$, 
exhibit a \emph{continuous} phase transition 
in the sense that at the critical point $\lambda^*$ 
the value of the order parameter is zero, 
that is to say, $a^*(\lambda^*)=0$.
On the other hand, for $c_7-S_2+S_1<0$ 
in Table~\ref{t:g3} four models are reported and the 
transition is continuous in the case 
$(c_7,S_2,S_1)=(0,3,2)$ 
whereas it is discontinuous
for 
$(c_7,S_2,S_1)=(0,3,1),(1,3,0),(1,3,1)$;
indeed, when $\lambda$ crosses the value $\lambda^*$ the stationary density 
jumps from $0$ to 
$1/4$, $3/4$, and $1/2$, respectively.
To our knowledge this is the first time in which 
discontinuous phase transitions are found for NDECA models. 

Finally, for even NDECA
the MF approximation 
predicts the existence of the phase 
transition for all the models for which the simulations 
in \cite{Fautomata2017} found the transition (see the list 
$\mathcal{F}$ given in Section~\ref{s:modello}), but for the 
NDECA with the map 202 as the rule $f_2$. 
We remark also that MF predicts the phase transition for 
many models which do not 
belong to $\mathcal{F}$.

\subsection{Examples}
\label{s:exe} 
\par\noindent
The ECA 18(00010010) is a chaotic CA belonging to Wolfram's class $W3$. 
It is also called \emph{diffusive rule}, and the reason can be understood 
by looking at the left panel in Figure~\ref{f:ECA}. 
The main feature for ECA 18 is that it creates a one at time $t$ only if there 
is a one either on its left or on its right at time $t-1$.  
Thus, it is an example of symmetric rule.

\begin{figure}[t]
\begin{center}
\includegraphics[height=4.0cm]{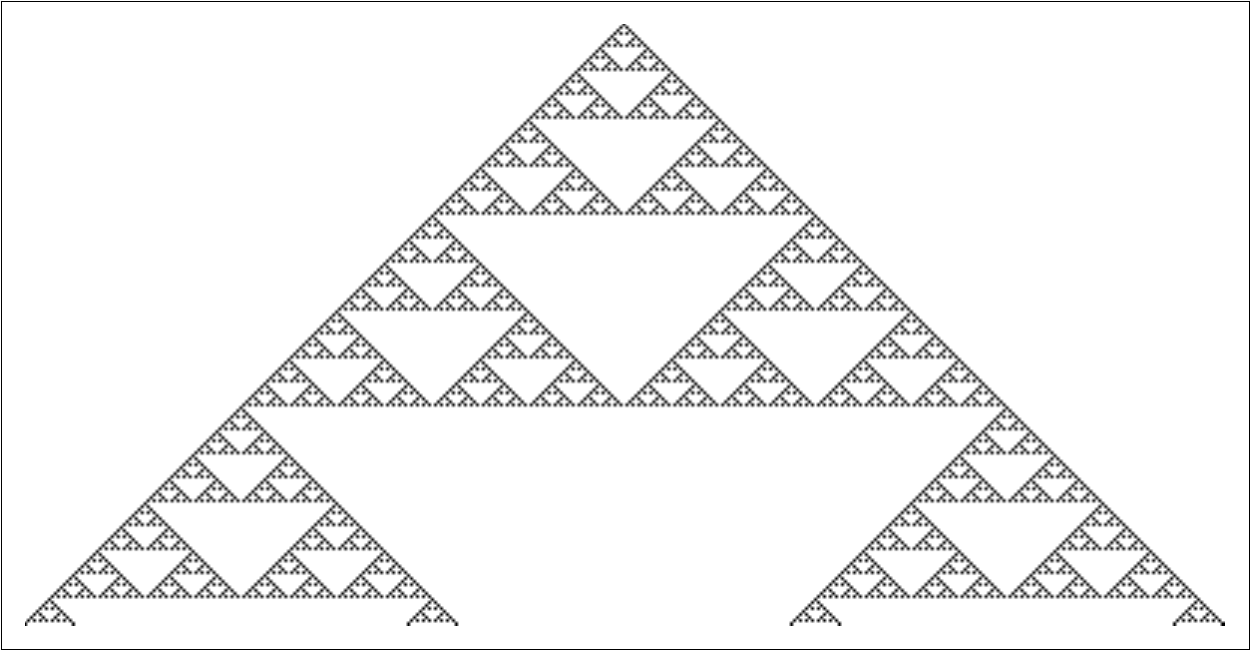}
\hskip 0.7cm
\includegraphics[height=4.0cm]{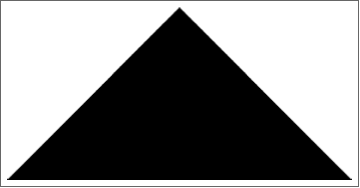}
\end{center}
\caption{Time evolution (from the top to the bottom) 
starting from a single one
of the 
ECA 18 (left) and 254 (right).}
\label{f:ECA}
\end{figure}

For  the NDECA  18, which has  $c_7=0$, $S_2=0$ and $S_1=2$, it is 
listed in Table~\ref{t:g2}, where the critical value $\lambda^*$ of 
the parameter $\lambda$ and the order parameter $a^*(\lambda)$
are reported. 
The MF prediction and numerical results are compared in 
Figure~\ref{f:18}: although in both cases the phase 
transition is observed, the quantitative match is not very good.

\begin{figure}[h]
\begin{center}
\includegraphics[height=4.6cm]{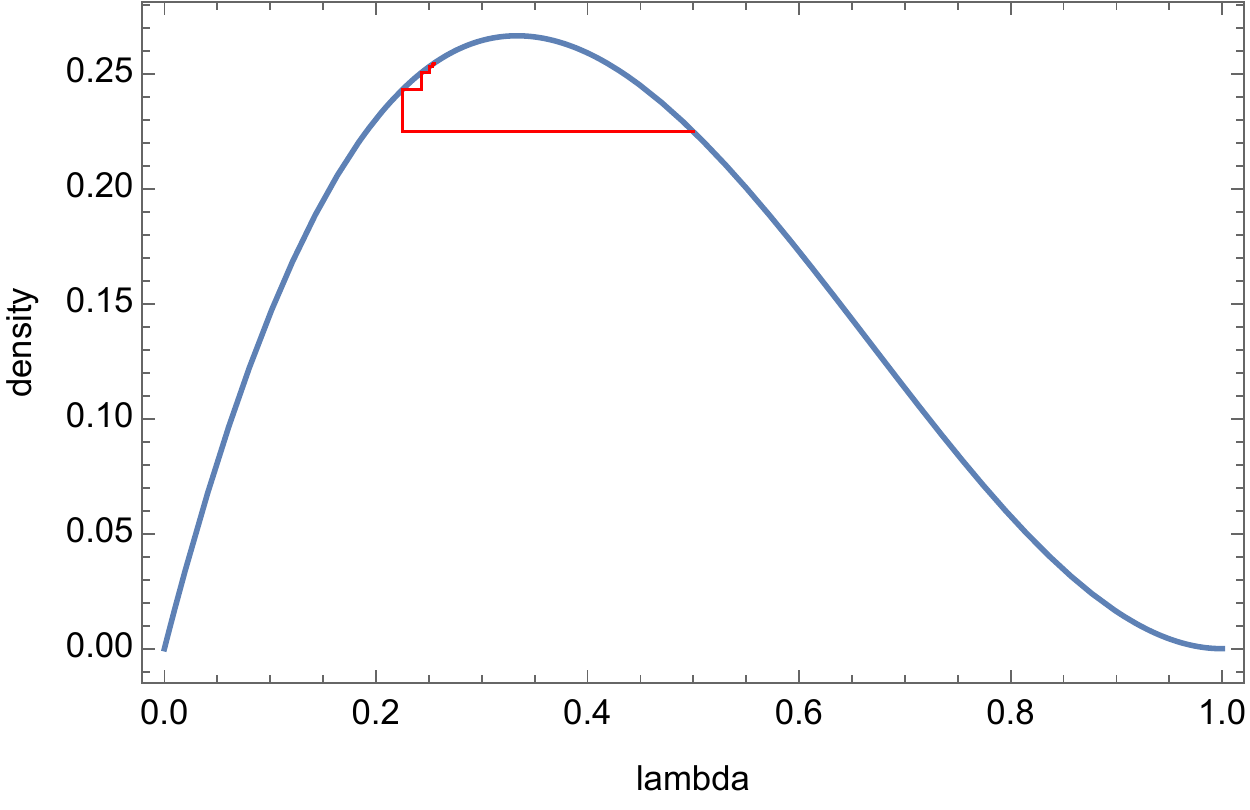}
\hskip 0.5 cm
\includegraphics[height=5.6cm]{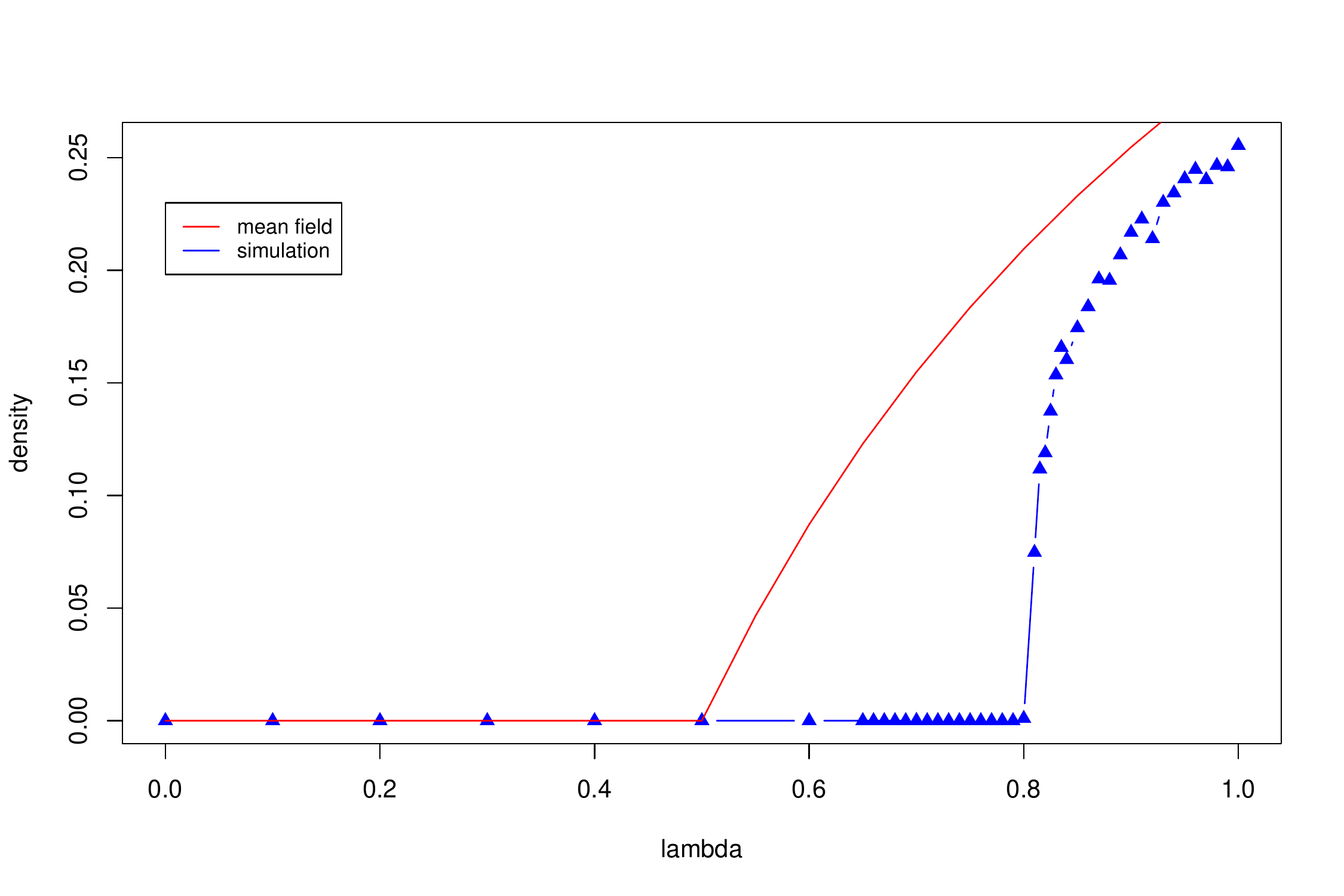}
\end{center}
\caption{(Color online)
On the left: finding the fixed point of equation \eqref{e:a_general}
for $\lambda=0.9$ for the ECA 18. On the right: comparison between 
the MF prediction (solid line) for the order parameter and 
the numerical results (dots). 
Simulations have been performed starting from an initial configuration 
with density $1/2$, on a lattice $n=10000$ and running the 
simulations for the time $5000$.}
\label{f:18}
\end{figure}

We consider now the NDECA 254(11111110). ECA 254
is a simple rule having a configuration with all ones as fixed 
point (Wolfram's class W1),  see the right panel
in Figure~\ref{f:ECA}. For this diploid we have $c_7=1$, $S_2=3$ and $S_1=3$, and the associated NDECA is 
listed in Table~\ref{t:g2}, where the critical value $\lambda^*$ of 
the parameter $\lambda$ and the order parameter $a^*(\lambda)$
are reported. 
The MF prediction and numerical results are compared in 
Figure~\ref{f:254}: 
the quantitative match is very good far from the critical point.

This diploid is well known in the literature and 
in \cite{BMM2013,T2004} is called 
\emph{percolation Probabilistic Cellular Automata}.  
In \cite[Example~2.4]{BMM2013} it is proven that there exists
$\lambda_\textup{c} \in (0,1)$ such that
for $\lambda<\lambda_\textup{c}$ the map is ergodic
and 
for $\lambda>\lambda_\textup{c}$ there are several invariant measures.
In other words, for this map  the paper \cite{BMM2013} 
provides a rigorous proof of the existence 
of the phase transition. 
The exact value of $\lambda_\textup{c}$ 
is not known, but it is proven that it
belongs to the interval $[\frac{1}{3}, \frac{53}{54}]$, see \cite{T2004}. 
A sharper result has been given in \cite{taggi}, where the lower bound $\lambda_\textup{c}>0.505$ is proven.  Therefore, for the NDECA 254 the infinite volume situation is 
close to the simulation results discussed in \cite{Fautomata2017} and illustrated in  Figure~\ref{f:254}. The simulations are indeed in an essentially infinite volume regime as we shall discuss in the sequel. We finally notice that the MF prediction $\lambda^*=1/2$ is very close to the lower bound
$\lambda_\textup{c}>0.505$.

A third example is given by NDECA 238 (11101110). This model is called \emph{Stavskaya} model and it is a particular case of the \emph{percolation} PCA when we choose $Q=\{0,1\}$ instead of $Q=\{-1,0,+1\}$.  \emph{Stavskaya} model has a phase transition for $\lambda>\lambda_\textup{c}$, with 
 $\lambda_\textup{c}>0.677$ (see \cite{MScmp1991}). Our MF approximation gives $\lambda^*=0.5$ ($c_7=1,\, S_2=3,\, S_1=2$).

Another example is the NDECA 102 (01100110), known as \emph{additive noise} PCA and it is proven to be ergodic for all $\lambda$ (Proposition~3.5 in \cite{MM2014}). This result is compatible with the MF approximation that predicts the uniqueness of the invariant measure.

As  a final example we consider the NDECA 17 (00010001), an \textit{odd NDECA}, for which the MF predicts a unique non--null
stationary measure. This model is known as \emph{directed animals} PCA (see for instance Figure~7 
in \cite{MM2014}) and it has been proven to have a unique invariant 
\emph{Markovian} measure.

\begin{figure}[h]
\begin{center}
\includegraphics[height=4.5cm]{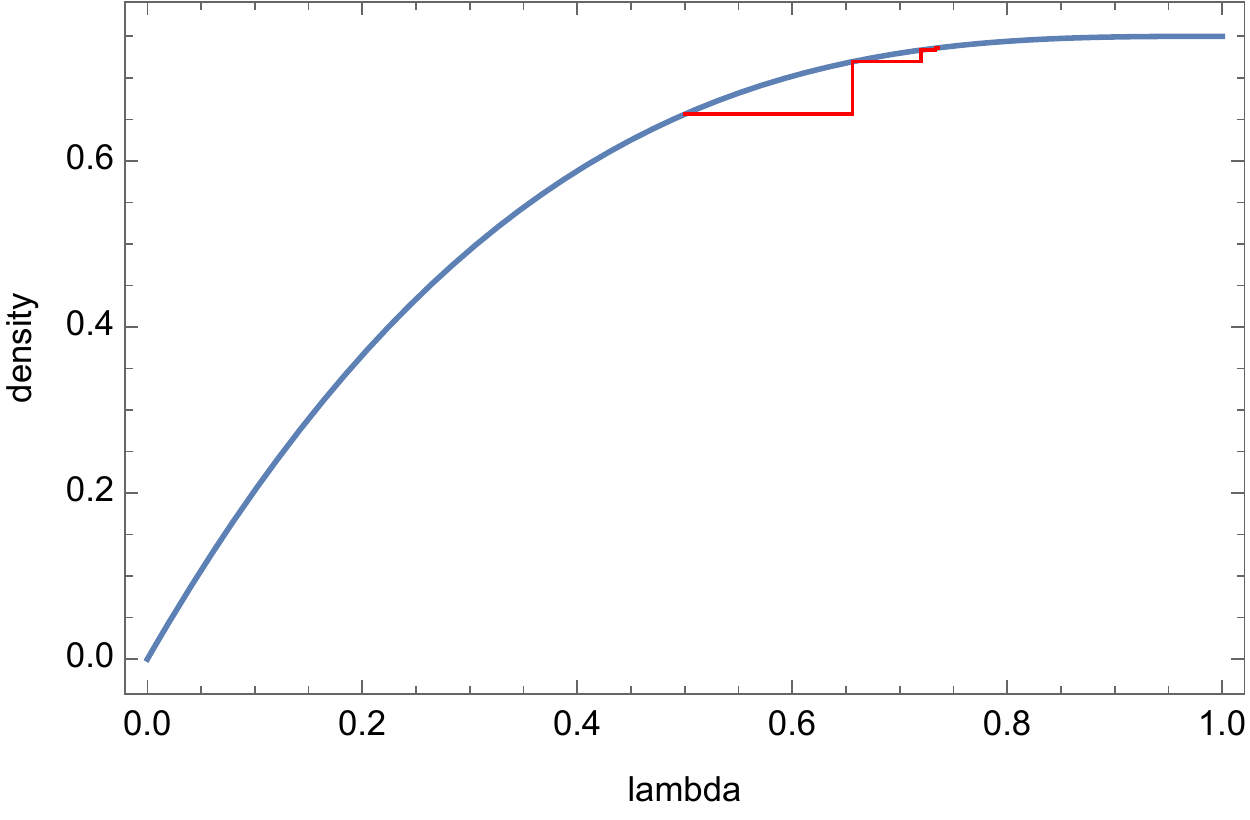}
\hskip 0.5 cm
\includegraphics[height=5.6cm]{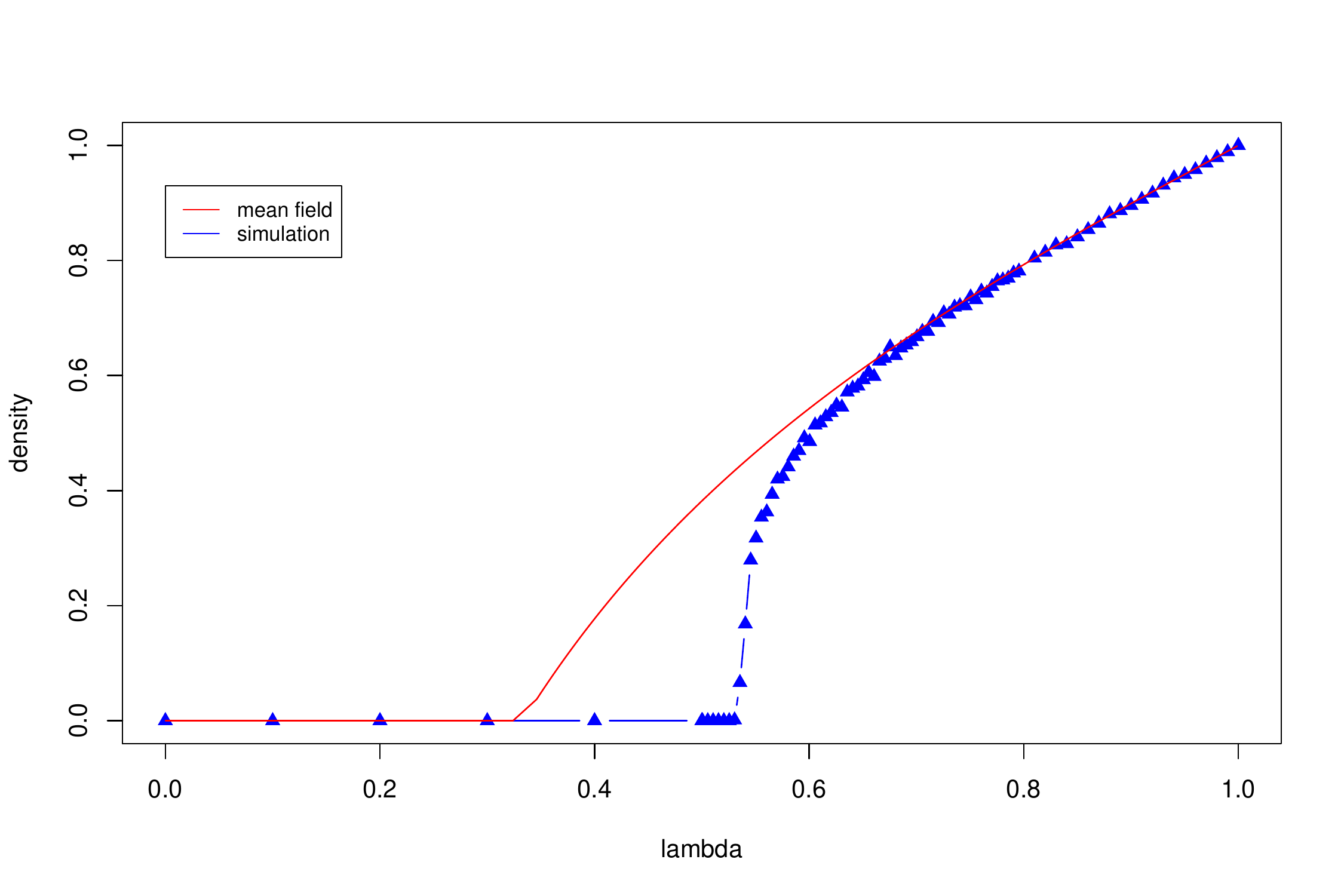}
\end{center}
\caption{(Color online)
On the left: finding the fixed point of equation \eqref{e:a_general}
for $\lambda=0.75$ for the ECA 254. On the right: comparison between 
the MF prediction (solid line) for the order parameter and 
the numerical results (dots). 
Simulations have been performed starting from an initial configuration 
with density $1/2$, on a lattice $n=10000$ and running the 
simulations for the time $5000$.}
\label{f:254}
\end{figure}

\section{Rigorous bounds for the critical point}
\label{s:infinite} 
\par\noindent
The DECA defined in Section~\ref{s:modello} with $f_1$ the null 
rule is 
considered here on $\mathbb{Z}$. 
Following \cite{MM2014}, for some finite subset $K\subset\mathbb{Z}$, consider $y =(y_k)_{k\in K}$. The \emph{cylinder} of base $K$ defined by $y$ is the set:
$$
[y]:=\left\{z\in Q^\mathbb{Z}: \forall k\in K, z_k=y_k\right\}
$$
Thus, the probability of the cylinder of base $K$ corresponding to $y$ of the chain started in $x$, can be written as:
\begin{equation}
\label{mod_infi}
P_x([y])
:=
\prod_{i\in K}
p_i(y_i|x)
\end{equation}
with 
\begin{equation}
\label{mod_infi02}
p_i(y_i|x)
=
y_i \lambda f_2(x_{i-1},x_i,x_{i+1})
+
(1-y_i)[1-\lambda f_2(x_{i-1},x_i,x_{i+1})]
.
\end{equation}
Misusing the notation, we denote again by 
$P_x$ the probability associated with the 
infinite volume process started at $x\in X=Q^{\mathbb{Z}}$
and by 
$\mu^x_t(\cdot)$ the probability measure of the 
chain started at $x$ at time 
$t$. 

In this framework 
the Dobrushin single site sufficient condition \cite{Dpit1971},
stated in \cite[equation (1--2)]{MScmp1991} in the case 
of Probabilistic Cellular Automata
and extended in \cite[Main Theorem]{MScmp1991},
provides an instrument to prove ergodicity, and hence 
existence of a unique invariant measure, for NDECA.
Let us introduce first the \emph{Dobrushin} parameter 
\begin{equation}
\label{e:d}
d=\sup_{a\in Q}
\sum_{i\in\mathbb{Z}}
\sup_{x\in Q^\mathbb{Z}}
|p_0(a|x)-p_0(a|x^i)|
\end{equation}
where $x^i$ is the configuration such that 
$(x^i)_j=x_j$ for $j\neq i$ and 
$(x^i)_i=1-x_i$. 
By using the \emph{single--site Dobrushin Criterion}
\cite{MScmp1991}, we have that 
if $d<1$ then the NDECA has a unique invariant measure.

We shall use this result, which will simply call
the \emph{Dobrushin criterion}, to find lower bounds to the 
critical value $\lambda_\textup{c}$ of the parameter $\lambda$.
Indeed, the criterion will allow us to prove uniqueness 
of the invariant measure for $\lambda$ smaller than 
some value $\mu$, which will provide a lower bound 
to $\lambda_\textup{c}$, that is to say,
$\lambda_\textup{c}\ge\mu$.

It is useful to note that, 
in our NDECA context, where we have only two symbols, the 
Dobrushin parameter simplifies to 
\begin{equation}
\label{e:d015}
d=
\sum_{i\in\mathbb{Z}}
\sup_{x\in Q^\mathbb{Z}}
|p_0(1|x)-p_0(1|x^i)|
.
\end{equation}
Moreover, using that 
$p_0(1|x)$ depends only on the value of the cells $i=-1,0,+1$, 
we can finally write
\begin{equation}
\label{e:d02}
d=
\sum_{i\in\{-1,0,+1\}}
\sup_{x\in Q^\mathbb{Z}}
|p_0(1|x)-p_0(1|x^i)|
.
\end{equation}

\begin{theorem}
\label{t:rig00}
For any choice of the ECA rule $f_2$, 
\begin{enumerate}
\item\label{i:rig00.1}
for any $i\in\{-1,0,+1\}$ and $x\in Q^3$, 
$|p_0(1|x)-p_0(1|x^i)|$ is either $0$ or $\lambda$;
\item\label{i:rig00.2}
the NDECA defined in (\ref{mod_infi}) has a unique invariant 
measure for all $\lambda<1/3$.
\end{enumerate}
\end{theorem}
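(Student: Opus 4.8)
The plan is to read off the Dobrushin parameter $d$ from the simplified expression (\ref{e:d02}), using the fact that for an NDECA the single--site probability is explicit and takes only two values.

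For part (\ref{i:rig00.1}), the key observation is that, since $f_1$ is the null rule, (\ref{mod_infi02}) gives $p_0(1|x)=\lambda f_2(x_{-1},x_0,x_1)$, and $f_2$, being a deterministic local rule, takes values in $\{0,1\}$. For $i\in\{-1,0,+1\}$ the configuration $x^i$ differs from $x$ only at cell $i$, so the triple $(x^i_{-1},x^i_0,x^i_1)$ differs from $(x_{-1},x_0,x_1)$ in exactly one coordinate. Hence
$p_0(1|x)-p_0(1|x^i)=\lambda\big[f_2(x_{-1},x_0,x_1)-f_2(x^i_{-1},x^i_0,x^i_1)\big]$,
and since both values of $f_2$ lie in $\{0,1\}$, their difference lies in $\{-1,0,1\}$; taking absolute values gives that $|p_0(1|x)-p_0(1|x^i)|$ is either $0$ or $\lambda$, as claimed.

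For part (\ref{i:rig00.2}), I would plug this into (\ref{e:d02}): $d=\sum_{i\in\{-1,0,+1\}}\sup_{x\in Q^{\mathbb{Z}}}|p_0(1|x)-p_0(1|x^i)|$. By part (\ref{i:rig00.1}) each summand is at most $\lambda$ for every $x$, so each of the three suprema is at most $\lambda$, whence $d\le 3\lambda$. Consequently $\lambda<1/3$ forces $d<1$, and the single--site Dobrushin criterion recalled above yields a unique invariant measure for the NDECA.

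There is no real obstacle here: the estimate follows at once from the two--symbol reduction (\ref{e:d02}) already established. The only point worth flagging is that the bound $d\le 3\lambda$ is deliberately crude, since it counts a contribution $\lambda$ from each of the three neighborhood cells even when $f_2$ does not genuinely depend on one of them; sharpening the constant by discarding such \emph{marginal} cells is exactly the refinement announced immediately after the theorem.
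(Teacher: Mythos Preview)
Your proof is correct and follows essentially the same approach as the paper: both observe from (\ref{mod_infi02}) that $p_0(1|x)\in\{0,\lambda\}$ (you make this explicit via $p_0(1|x)=\lambda f_2$), conclude that each term in (\ref{e:d02}) is at most $\lambda$, and hence $d\le 3\lambda$, so the Dobrushin criterion applies for $\lambda<1/3$. Your closing remark on marginal cells correctly anticipates the refinement in Theorem~\ref{t:rig05}.
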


\begin{proof}
Item~\ref{i:rig00.1}:
from \eqref{mod_infi02},
the probability 
$p_0(1|x)$ is either $0$ or $\lambda$, hence 
$|p_0(1|x)-p_0(1|x^i)|$ is either $0$ or $\lambda$. 
Item~\ref{i:rig00.2}:
Using item~\ref{i:rig00.1} and \eqref{e:d02}
we can then conclude that 
$d\le 3\lambda$.
Thus, for  $\lambda<1/3$ the Dobrushin criterion 
ensures that the invariant measure is unique  
in the infinite volume case and coincides with the delta measure in $0$. 
\end{proof}
 
Hence, by Theorem~\ref{t:rig00}  we have the lower bound $\lambda_\textup{c}\ge 1/3$
 for the 
critical value of the parameter\footnote{By using Theorem~3.9 in \cite{MM2014} ergodicity can be proven for $\lambda<1/3$. However, this criterion will not allow improvement when one consider subclasses of DECA. }
$\lambda$.
This estimate is rather poor if compared to the 
numerical and MF results, which predicts a value around $0.7$ 
for the critical point $\lambda_\textup{c}$.
On the other hand, 
since the result in the Theorem~\ref{t:rig00}
is uniform in the choice of the rule $f_2$ of the 
NDECA, one can expect that a better bound could be found 
if the Dobrushin criterion were applied to a particular subset of rules.
In order to realize a useful classification of the NDECA 
we introduce the following notion: 
we say that the cell 
$i\in\{-1,0,+1\}$ is \emph{marginal} 
for the NDECA if and only if 
$p_0(1|x)= p_0(1|x^i)$
for any $x\in Q^3$.
Note that, if 
$i\in\{-1,0,+1\}$ is not marginal 
for the NDECA then there exists 
$x\in Q^3$
such that 
$p_0(1|x)\not= p_0(1|x^i)$.
In other words if a NDECA has unessential cells, there exists 
a not empty subset of $\{-1,0,+1\}$ such that, for any configuration, 
the probability to set one at the origin is not affected 
if the value of a cell in this subset is varied, whereas 
there are configurations such that 
it changes if the value of any other cell is modified. 

\begin{theorem}
\label{t:rig05}
If $A\subset\{-1,0,+1\}$ is the maximal (with respect to inclusion)
set of marginal cells, then $d=(3-|A|)\lambda$.
\end{theorem}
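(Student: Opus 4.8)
The plan is to prove this by two inequalities: $d \le (3-|A|)\lambda$ and $d \ge (3-|A|)\lambda$. By Theorem~2.1, item~\ref{i:rig00.1}, each summand $\sup_{x}|p_0(1|x)-p_0(1|x^i)|$ in formula \eqref{e:d02} is either $0$ or $\lambda$, so $d$ is automatically an integer multiple of $\lambda$ between $0$ and $3\lambda$, and it suffices to count how many of the three indices $i\in\{-1,0,+1\}$ contribute a $\lambda$. The key observation is that, by the very definition of a marginal cell, an index $i$ contributes $0$ to the sum in \eqref{e:d02} if and only if $i$ is marginal: indeed $\sup_x |p_0(1|x)-p_0(1|x^i)| = 0$ means $p_0(1|x)=p_0(1|x^i)$ for all $x\in Q^3$, which is precisely the condition that $i\in A$.

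First I would establish the upper bound. For every marginal cell $i\in A$ we have $p_0(1|x)=p_0(1|x^i)$ for all $x$, hence the corresponding supremum in \eqref{e:d02} vanishes; summing over the remaining $3-|A|$ indices and using item~\ref{i:rig00.1} to bound each of those terms by $\lambda$ gives $d \le (3-|A|)\lambda$.

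Next I would establish the matching lower bound, which is where the definition of $A$ as the \emph{maximal} set of marginal cells is used. If $i\notin A$, then $i$ is not marginal (maximality guarantees there is no marginal cell outside $A$), so by the remark following the notion of marginal cell there exists some $x\in Q^3$ with $p_0(1|x)\neq p_0(1|x^i)$; by item~\ref{i:rig00.1} the absolute difference equals $\lambda$, so $\sup_x|p_0(1|x)-p_0(1|x^i)| = \lambda$. There are exactly $3-|A|$ such indices, so the sum in \eqref{e:d02} is at least $(3-|A|)\lambda$. Combining the two bounds yields $d = (3-|A|)\lambda$.

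The argument is essentially bookkeeping once Theorem~2.1 is in hand; the only point requiring a little care is the logical equivalence "index $i$ contributes $0$ $\iff$ $i\in A$," which relies on $A$ being maximal rather than just \emph{a} set of marginal cells — if $A$ were merely some set of marginal cells, one would only get $d \le (3-|A|)\lambda$. I expect this is the one place where a careless write-up could slip; otherwise there is no real obstacle.
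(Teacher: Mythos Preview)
Your proposal is correct and follows essentially the same approach as the paper: the paper's proof simply observes, using item~\ref{i:rig00.1} of Theorem~\ref{t:rig00} together with the definition of marginal cell, that $\sup_{x}|p_0(1|x)-p_0(1|x^i)|$ equals $0$ if $i\in A$ and $\lambda$ otherwise, and then sums over $i$ via \eqref{e:d02}. Your two-inequality presentation unpacks this same computation, and your remark on why maximality of $A$ is needed for the lower bound is exactly the point (the paper leaves this implicit).
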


\begin{proof}
By item~\ref{i:rig00.1} of Theorem~\ref{t:rig00} and 
the definition of marginal cells, 
we have that 
\begin{displaymath}
\sup_{x\in Q^3}|p_0(1|x)-p_0(1|x^i)|=
\left\{
\begin{array}{ll}
0 & \;\;\;\textup{if } i\in A\\
\lambda & \;\;\;\textup{otherwise}.\\
\end{array}
\right.
\end{displaymath}
The theorem then follows from \eqref{e:d02}.
\end{proof}

The above theorem allows a full classification of NDECA
with respect to the number of marginal cells. Depending on 
this number the Dobrushin parameter can be exactly computed 
for the NDECA and hopefully the estimates of the 
critical point $\lambda_\textup{c}$ can be improved. 
Note that, in particular, that for a NDECA 
not having any marginal cells, since the maximal 
set of marginal set is the empty set, 
the Dobrushin 
parameter is $3\lambda$ so that in these cases the general blind  bound 
of Theorem~\ref{t:rig00} is not improved. 
In the following sections all the possible cases will be reviewed. 

\subsection{Three marginal cells}
\label{s:mc3} 
\par\noindent
This case is rather trivial. We consider the ECA
mapping all configurations to the same cell value. 
Namely, we consider the maps 0(00000000) and 255(11111111): 
the first one maps all configurations to zero and the second 
all configurations to one. 
In both cases the number of marginal cells is three, so, by using 
Theorem~\ref{t:rig05}, we have that $d=0$. 
Hence, by the Dobrushin criterion it follows that these two 
NDECA have a single invariant measure for any $\lambda\in(0,1)$. 

\subsection{Two marginal cells}
\label{s:mc2} 
\par\noindent
All possible cases are listed in Table~\ref{t:d2}. Since the position of the 
not marginal cell can be chosen in three possible ways, 
we have the following  six choices for the $f_2$ map.
The not marginal cell is the left one:
240(11110000), 15(00001111).
The not marginal cell is the central one: 
204(11001100), 51(00110011).
The not marginal cell is the right one: 
170(10101010), 85(01010101).
It s interesting to notice that the ECA 
$240$, $204$, and $170$ have 
a straightforward interpretation in terms of shift operators: 
right shift, identity, left shift. 
For the  NDECA with rule $f_2$ one of the six rules listed above, 
since the maximal set of marginal set has cardinality equal to two, 
the Dobrushin 
parameter is $\lambda$.
Hence, by the Dobrushin criterion it follows that these six
NDECA have a single invariant measure for any $\lambda\in(0,1)$. 

This rigorous result is coherent with simulations and the MF analysis, 
indeed for the six maps listed above neither simulations nor MF 
predict the existence of the phase transition.

\begin{table}
\begin{center}
\begin{tabular}{c|c|c|c|c}
\hline\hline
not marginal & marginal & marginal & $f_2$ & $f_2$ \\
\hline\hline
1 & 1 & 1 & \multirow{4}{*}{1}& \multirow{4}{*}{0}\\
1 & 1 & 0 & & \\
1 & 0 & 1 & & \\
1 & 0 & 0 & & \\
\hline
0 & 1 & 1 & \multirow{4}{*}{0}& \multirow{4}{*}{1}\\
0 & 1 & 0 & & \\
0 & 0 & 1 & & \\
0 & 0 & 0 & & \\
\hline\hline
\end{tabular}
\end{center}
\caption{Possible choices of the rule $f_2$ in case of 
two marginal cells. First column: not marginal cell. 
Second and third columns: marginal cells. Fourth and 
fifth columns possible values of $f_2$. 
}
\label{t:d2}
\end{table}

\subsection{One marginal cell}
\label{s:mc1} 
\par\noindent
Half of the possible cases are listed in Table~\ref{t:d1}, 
the remaining one can be found as described in the 
caption. 
Since the position of the 
marginal cell can be chosen in three possible ways, 
the five cases reported in the table gives rise to the following 
fifteen choices for the $f_2$ map.
Marginal cell on the left:
153(10011001),
136(10001000),
187(10111011),
238(11101110),
221(11011101).
Marginal cell at the center:
165(10100101),
160(10100000),
175(10101111),
250(11111010),
245(11110101).
Marginal cell on the right:
195(11000011),
192(11000000),
207(11001111),
252(11111100),
243(11110011).
The remaining fifteen maps can be found by those 
reported above by changing the zeroes with the ones so that 
the complement to 255 is found. Namely, we have: 
102(01100110),
119(01110111),
68(01000100),
17(00010001),
34(00100010)
for the marginal cell on the left, 
90(01011010),
95(01011111),
80(01010000),
5(00000101),
10(00001010)
for the marginal cell as central cell, 
60(00111100),
63(00111111),
48(00110000),
2(00000011),
12(00001100)
for the marginal cell on the right.

For the  NDECA with rule $f_2$ one of the thirty rules listed above, 
since the maximal set of marginal set has cardinality equal to one, 
the Dobrushin 
parameter is $2\lambda$.
Hence, by the Dobrushin criterion it follows that these thirty NDECA
have a single invariant measure for any $\lambda<1/2$, which 
gives the lower bound $\lambda_\textup{c}\ge1/2$ for the critical point. 

It is worth noting that 
for the rules 2, 10, 34, 48, 68, 80, 136, 160, 192 neither simulations 
nor the MF analysis predict the phase transition. 
For the rules 10 and 252 the simulations do not observe the phase 
transition, whereas the MF approximation predict the existence 
of the transition with critical point $\lambda^*=1/2$.
Finally, for the rules 60, 90, 238, and 250
both simulations and the MF analysis 
predict the phase transition with a MF estimate of the 
critical point $\lambda^*=1/2$.

\begin{table}
\begin{center}
\begin{tabular}{c
|c
|c
|c
|c
|c
|c
|c
}
\hline\hline
not marginal & not marginal & marginal & 
$f_2$ & 
$f_2$ & 
$f_2$ & 
$f_2$ & 
$f_2$ 
\\
\hline\hline
1 & 1 & 1 & 
\multirow{2}{*}{1}
& 
\multirow{2}{*}{1}
& 
\multirow{2}{*}{1}
& 
\multirow{2}{*}{1}
& 
\multirow{2}{*}{1}
\\
1 & 1 & 0 & 
& 
& 
& 
& 
\\
\hline
1 & 0 & 1 & 
\multirow{2}{*}{0}
& 
\multirow{2}{*}{0}
& 
\multirow{2}{*}{0}
& 
\multirow{2}{*}{1}
& 
\multirow{2}{*}{1}
\\
1 & 0 & 0 & 
& 
& 
& 
& 
\\
\hline
0 & 1 & 1 & 
\multirow{2}{*}{0}
& 
\multirow{2}{*}{0}
& 
\multirow{2}{*}{1}
& 
\multirow{2}{*}{1}
& 
\multirow{2}{*}{0}
\\
0 & 1 & 0 & 
& 
& 
& 
& 
\\
\hline
0 & 0 & 1 & 
\multirow{2}{*}{1}
& 
\multirow{2}{*}{0}
& 
\multirow{2}{*}{1}
& 
\multirow{2}{*}{0}
& 
\multirow{2}{*}{1}
\\
0 & 0 & 0 & 
& 
& 
& 
& 
\\
\hline\hline
\end{tabular}
\end{center}
\caption{Possible choices of the rule $f_2$ in case of 
one marginal cell. 
First and second column: not marginal cells. 
Third column: marginal cell. Other columns: 
possible values of $f_2$. 
Five more cases are not listed: the values of the map $f_2$ are obtained 
by exchanging the symbols $0$ and $1$ in those reported in the table.
}
\label{t:d1}
\end{table}

\subsection{Examples}
\label{s:exe2}
\par\noindent
In this section we will look again at the examples given  Section~\ref{s:exe}, under the perspective of the rigorous results of Section~\ref{s:infinite}.

For the NDECA 18,  NDECA 17 (\emph{directed animals}) and NDECA 102 (\emph{noisy additive PCA}) we have the Dobrushin bound: $\lambda_\textup{c}\geq 1/3$.
However, in case of  NDECA 102 (\emph{noisy additive} PCA),  NDECA 238 (\emph{Stavskaya model}) and NDECA 254 ((\emph{percolation PCA} )) we have one  \emph{marginal cell} (the left), so that  $\lambda_\textup{c}>1/2$ (see Section~\ref{s:mc1}), compatible with the known results reviewed in Section~\ref{s:exe}.

\section{Time scales for finite volume diploids}
\label{s:finito} 
\par\noindent
In this section we change the perspective and examine the system in finite 
volume within time scales increasing with $n$. 
The main question is that of understanding to which extent 
finite volume simulations are a reasonable description of the 
infinite volume behaviors of diploids.

We will confine our discussion to NDECA with even $f_2$ rule, for which, 
as we have already remarked in Section~\ref{s:modello}, 
the measure concentrated on the zero configuration $0$ is an
invariant measure\footnote{In the odd case we have seen that 
both MF and simulations predict absence of phase transition
in infinite volume.
We thus expect the existence of a single invariant measure 
with not zero density. 
This can be easily proven in some simple cases. 
For instance, consider the trivial diploid where  $f_2$ is the rule $255$:
each cell is updated independently on the others and also on the 
past. Hence, the evolution of a cell is a sequence of Bernoulli variables 
with parameter $\lambda$. 
Thus, the invariant measure of the chain is product measure
and for each cell $i$ it is equal to $\pi_i(0)=1-\lambda$ and 
$\pi_i(1)=\lambda$.} 
at finite volume, since 
$p(0,x)=0$ for any $x\in X_n$. 
Moreover, since starting 
from any configuration the probability that the chain reaches the state $0$ 
is finite, we expect that any simulation, sooner or later, 
will be trapped in $0$. The aim of this section is precisely 
that of giving an estimate of the time needed by the chain 
to hit $0$. 

We start with a a very rough 
heuristic argument suggesting that for $\lambda$ small enough 
the chain should reach the configuration $0$ in a time 
logarithmically increasing with the size $n$.
Thus, consider a NDECA with even $f_2$ rule and 
choose the configuration $1$ as initial state: 
\begin{itemize}
\item[--] 
at the first step (time $1$) 
the number of $1$'s switched to $0$ is $(1-\lambda)n$, so that 
the number of ones at time $1$ is $n-(1-\lambda)n=\lambda n$.
\item[--]
At the second step the number of $1$'s turned to $0$ 
will be $\lambda n(1-\lambda)$. But at this stage 
one has to consider that zeros can be switched to one: since the rule 
$f_2$ is even, one zero, in order to have the chance to be 
turned to $1$, must at least have a $1$ among its neighboring site.
This, indeed, depends on the rule, but in this simple argument 
we consider the case which is most favorable to the $0$ to $1$ 
switch and assume that one single neighboring $1$ is sufficient 
to perform the switch according to the rule $f_2$. 
Under such assumption (exaggerating) we can estimate the number of 
zeros turning to one as twice the number of ones at time one 
times $\lambda$, namely, $2\lambda n\lambda$.
Hence, at time two the number of ones will be 
$\lambda n-(1-\lambda)\lambda n+2n\lambda^2=3\lambda^2n$. 
\item[--]
Iterating the computation at time three we find $9\lambda^3n$ ones 
and at time $t$ we will find $(3\lambda)^tn/3$ ones. 
This number will be of order one at $t\sim\log(n/3)/\log(1/(3\lambda))$, 
meaning that for $\lambda<1/3$ we expect that in a logarithmic time 
the chain will converge to the zero configuration. 
\end{itemize}
Thus, for $\lambda$ small the configuration $0$ is reached in a time 
logarithmically increasing with the size $n$. This is 
checked numerically in the left panel of Figure~\ref{f:30}
for the $f_2$ map 254. 

\begin{figure}[t]
\begin{center}
\includegraphics[height=4.5cm]{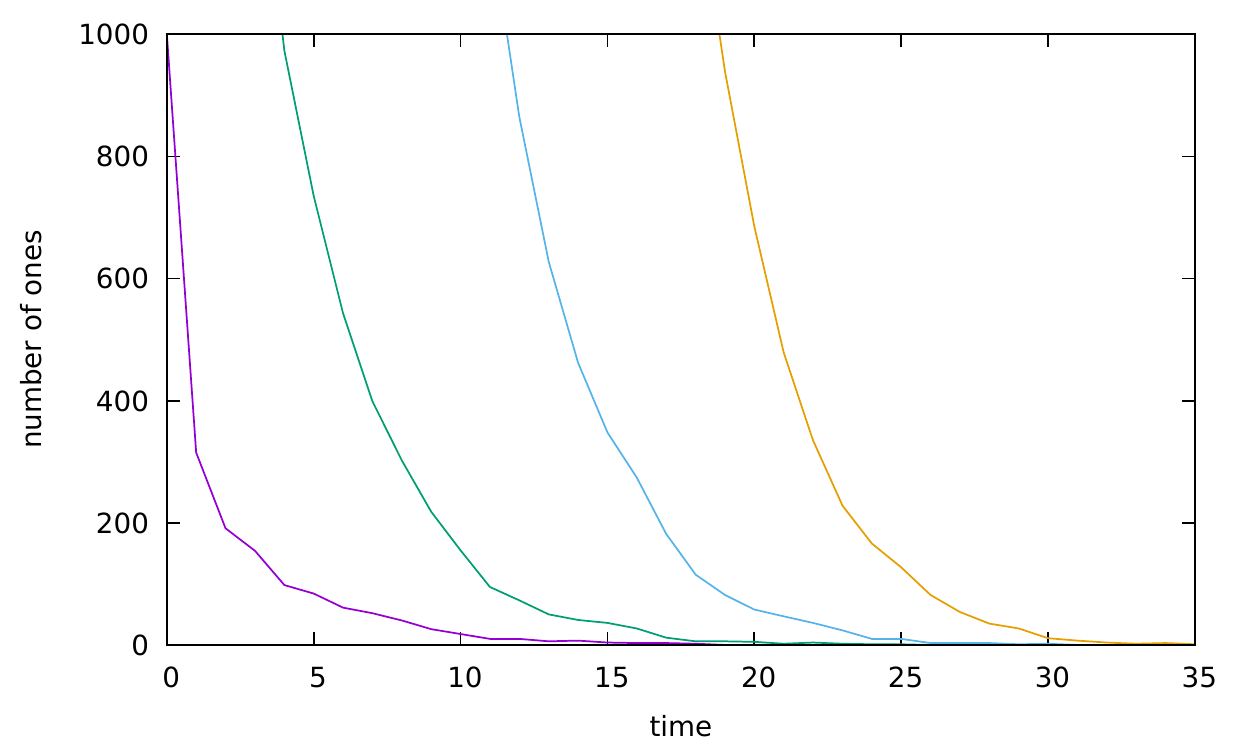}
\includegraphics[height=4.5cm]{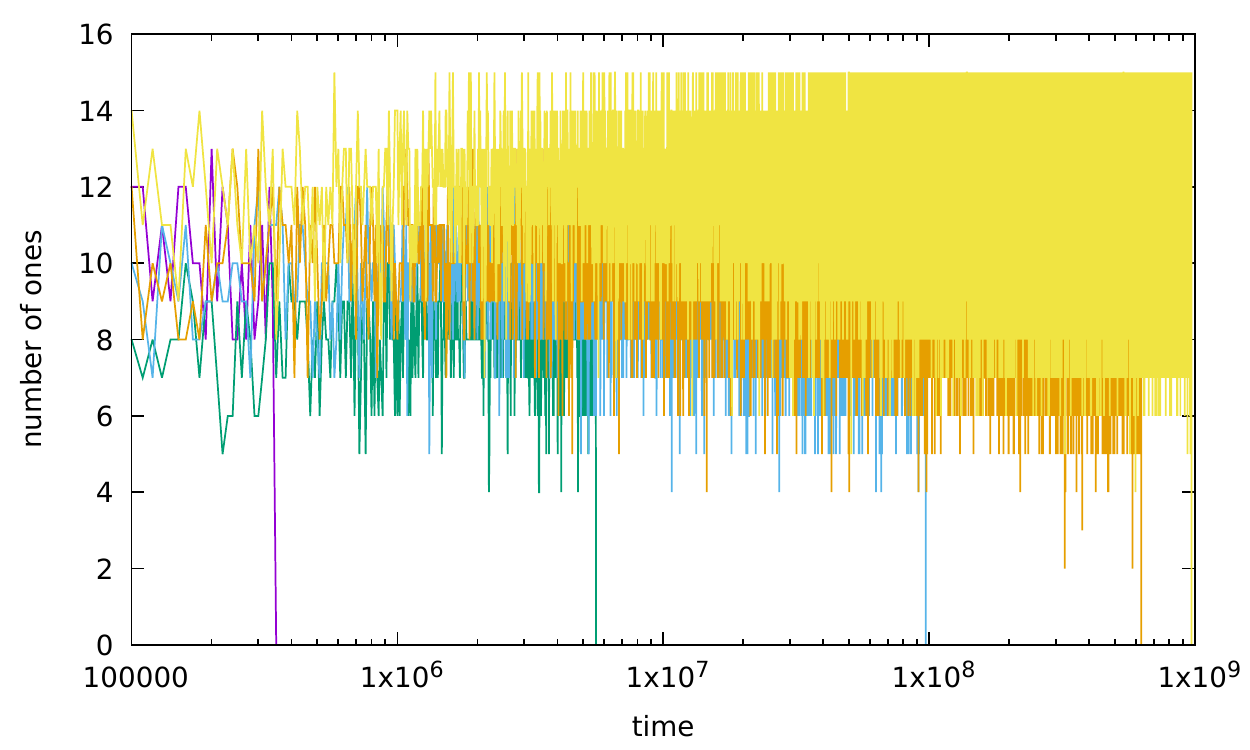}
\end{center}
\caption{(Color online)
Relaxation of NDECA with map $f_2$ 254 
starting from the 1 configuration.
On the left
$\lambda=0.30$,
$n=10^3$ (violet), 
$n=10^4$ (green), 
$n=10^5$ (blue),
$n=10^6$ (orange);
the corresponding values of the estimate $\log(n)/log(1/(3\lambda))$ 
for the relaxation time are
$55.1$, 
$77.0$, 
$98.8$,
$120.7$.
On the 
right:
the size $n$ of the lattice and the parameter $\lambda$ are chosen as 
follows: 
$n=15$ and $\lambda=0.7$ (violet), 
$n=10$ and $\lambda=0.82$ (green), 
$n=12$ and $\lambda=0.81$ (blue),
$n=13$ and $\lambda=0.81$ (orange),
$n=15$ and $\lambda=0.81$ (yellow);
the corresponding values of the estimate $1/(1-\lambda)^n$ 
for the relaxation time are
$6.9\times10^7$,
$2.8\times10^7$,
$4.5\times10^8$,
$2.4\times10^9$,
$6.6\times10^{10}$.
}
\label{f:30}
\end{figure}

The natural question, now, is about the behavior of the chain for 
$\lambda$ close to $1$. 
In the following lemma we give an upper bound 
for the  probability of being at time $t$ in a configuration different 
from the stationary state $0$ and will provide us with 
an argument to estimate the relaxation time for $\lambda$ 
close to $1$. 

\begin{theorem}
\label{l:conv}
Consider a NDECA with even map $f_2$. 
For any initial state $y\in X_n\setminus \{0\}$
we have that 
\begin{equation}
\label{e:upper}
P_y(\{\xi^t\neq0\})
\leq [1-(1-\lambda)^n]^t
\leq \exp\{-t (1-\lambda)^n\}
.
\end{equation}
\end{theorem}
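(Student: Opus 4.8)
The plan is to prove the bound by a one-step argument iterated over time, exploiting the fact that from \emph{any} configuration there is a uniformly positive probability of jumping directly to the configuration $0$ in a single step. The key observation is that, since $f_1$ is the null rule, for every cell $i$ the probability of setting $\xi^{t}_i=0$ given any neighborhood is at least $1-\lambda$ (it equals $1$ when the local rule $f_2$ evaluates to $0$ on the neighborhood, and equals $1-\lambda$ when it evaluates to $1$). Because the update is performed independently across the $n$ cells, we get, for every $x\in X_n$,
\begin{equation}
\label{e:onestep}
p(x,0)=\prod_{i\in\mathbb{L}_n}p_i(0|x)\geq(1-\lambda)^n.
\end{equation}

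From here I would set up the induction on $t$. Let $q_t:=\sup_{y\in X_n\setminus\{0\}}P_y(\xi^t\neq 0)$ and note $q_0\le 1$. Conditioning on the state at time $1$ and using the Markov property together with the fact that $0$ is absorbing (because $f_2$ is even, so $p(0,x)=0$ for $x\neq 0$, as recalled in Section~\ref{s:finito}),
\begin{equation}
\label{e:iter}
P_y(\xi^t\neq 0)
=\sum_{z\neq 0}p(y,z)\,P_z(\xi^{t-1}\neq 0)
\leq\big(1-p(y,0)\big)\,q_{t-1}
\leq\big(1-(1-\lambda)^n\big)\,q_{t-1},
\end{equation}
where in the last inequality I used \eqref{e:onestep}. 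Iterating \eqref{e:iter} from $t$ down to $0$ yields $P_y(\xi^t\neq 0)\le[1-(1-\lambda)^n]^t$, which is the first inequality in \eqref{e:upper}. The second inequality is the elementary estimate $1-u\le e^{-u}$ applied with $u=(1-\lambda)^n$ and raised to the $t$-th power.

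The argument is essentially routine once \eqref{e:onestep} is in place, so there is no serious obstacle; the only point requiring a little care is the bookkeeping in \eqref{e:iter}, namely making sure the supremum over non-zero initial states is propagated correctly and that the absorbing character of $0$ (valid precisely because $f_2$ is even) is invoked so that the sum really runs over $z\neq 0$ and can be bounded by $q_{t-1}$ times the total escape probability $1-p(y,0)$. I would also remark that the bound is uniform in $y\in X_n\setminus\{0\}$ and in fact trivially holds for $y=0$ as well (both sides can be read as $0$), and that it immediately implies almost-sure convergence to $0$ in finite volume and gives the relaxation-time scale $1/(1-\lambda)^n$ used in the right panel of Figure~\ref{f:30}.
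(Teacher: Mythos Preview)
Your proof is correct and follows essentially the same approach as the paper: both arguments hinge on the one--step lower bound $p(x,0)\ge(1-\lambda)^n$ (your \eqref{e:onestep}) combined with the absorbing nature of $0$, and then iterate the resulting recursive inequality. The only cosmetic difference is that you condition on the state at time $1$ and introduce the supremum $q_{t-1}$, whereas the paper conditions on the state at time $t-1$ and bounds $P_y(\xi^t\neq 0\,|\,\xi^{t-1}=x)$ directly; these are equivalent formulations of the same recursion.
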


\begin{proof}
Recall that the Markov chain 
is denoted by $\xi^t$. Since $0$ is a fixed point, 
for any 
initial state the event 
$\{\xi^t\in X_n\setminus\{0\}\}$ is a subset of the event 
$\{\xi^{t-1}\in X_n\setminus\{0\}\}$.
Thus, given the initial state $y$, we have 
\begin{displaymath}
P_y(\{\xi^t\neq0\})
=
P_y(\{\xi^t\neq0\}\cap\{\xi^{t-1}\neq0\})
=
\sum_{x\neq0} P_y(\{\xi^t\neq0\}\cap\{\xi^{t-1}=x\})
\end{displaymath}
and using the Markov property we get
\begin{displaymath}
P_y(\{\xi^t\neq0\})
=
\sum_{x\neq0} P_y(\{\xi^t\neq0\}|\{\xi^{t-1}=x\})
P_y(\{\xi^{t-1}=x\})
\end{displaymath}
which yields the recursive bound 
\begin{displaymath}
\begin{array}{rcl}
P_y(\{\xi^t\neq0\})
&\le&
{\displaystyle
\sup_{x\neq0} P_y(\{\xi^t\neq0\}|\{\xi^{t-1}=x\})
\sum_{x\neq0} 
P_y(\{\xi^{t-1}=x\})
}
\\
&=&
{\displaystyle
\sup_{x\neq0} P_y(\{\xi^t\neq0\}|\{\xi^{t-1}=x\})
P_y(\{\xi^{t-1}\neq0\})
\;\;.
}
\end{array}
\end{displaymath}
Moreover, 
we note that
\begin{displaymath}
\begin{array}{rcl}
{\displaystyle
\sup_{x\neq0} P_y(\{\xi^t\neq0\}|\{\xi^{t-1}=x\})
}
&=&
{\displaystyle
\sup_{x\neq0} [1-P_y(\{\xi^t=0\}|\{\xi^{t-1}=x\})]
}
\\
&=&
{\displaystyle
1-\inf_{x\neq0} P_y(\{\xi^t=0\}|\{\xi^{t-1}=x\})
.
}
\end{array}
\end{displaymath}
Since to put $0$ in a cell has a probability cost at least  
$1-\lambda$, we have that
\begin{displaymath}
\inf_{x\neq0} P_y(\{\xi^t=0\}|\{\xi^{t-1}=x\})\ge (1-\lambda)^n
\;\;.
\end{displaymath}
Collecting all the bounds and iterating with respect to $t$ we have that 
\begin{displaymath}
\begin{array}{rcl}
P_y(\{\xi^t\neq0\})
&\le&
{\displaystyle
[1-(1-\lambda)^n]
P_y(\{\xi^{t-1}\neq0\})
}
\\
&\le&
{\displaystyle
[1-(1-\lambda)^n]^{t-1}
P_y(\{\xi^1\neq0\})
\le
[1-(1-\lambda)^n]^t
.
}
\end{array}
\end{displaymath}
The second bound is immediate.
\end{proof}

As we noticed before for $\lambda$ small a time $t(n)$
diverging logarithmically with $n$ 
seems to be sufficient for the finite volume diploid to approach
the $0$ state, in the sense that 
$P_y(\{\xi^{t(n)}\neq0\})$ tends to zero as $n\to\infty$. 
The above theorem, for any 
$\lambda\in(0,1)$, proves a weaker, but rigorous, statement: a time 
$t(n)$
diverging exponentially with $n$ is 
sufficient for the finite volume diploid to relax to 
the $0$ state in the sense specified above. Indeed, 
if $t(n)=\alpha^{n}$, with $\alpha\ge 1/(1-\lambda)$, then from 
Theorem~\ref{l:conv} it follows that 
\begin{displaymath}
P_y(\{\xi^{t(n)}\neq0\})
\leq e^{-[\alpha(1-\lambda)]^n}
\to0
\end{displaymath}
in the limit $n\to\infty$.
The Theorem~\ref{l:conv} is useless for times $t(n)$ smaller than 
$(1-\lambda)^n$, namely, such that 
$t(n)(1-\lambda)^n\to0$. Indeed, in such a case the r.h.s.\ 
of \eqref{e:upper} tends to $1$ and the bound is trivial.

This behavior, which shares some common feature with metastable 
states, is checked 
numerically in the right panel of Figure~\ref{f:30} for the 
NDECA with $f_2$ map number 254.
We had to use 
ridiculously small lattices, i.e., $n=10,12,13,15$, 
due to the exponential dependence of the relaxation time on $n$.
In the picture on the horizontal axis we report the time on a 
logarithmic scale and on the vertical axis we report the number of cells 
with value one.
Notice that in a time of order $10^9$ all the diploid ECA except the 
yellow one relax to the stationary state $0$. 

Finally, we come back to the original question about the 
ability of the simulation to catch the infinite volume behavior 
of the NDECA.
As it is absolutely reasonable,
from the Theorem~\ref{l:conv} it follows immediately that 
for a fixed $n$ the diploid converges to the 
stationary state $0$ with probability one
in the limit $t\to\infty$.
How is this result compatible with the numerical studies presented in  
Section~\ref{s:modello}? Indeed, 
those simulations are obviously performed at finite volume, 
nevertheless the system is found in a stationary state with 
density different from zero. 
The key is the choice of the time--scales considered in the simulations 
and the size $n$ of the chain:
in the simulations $n=10^4$ and $t=5\cdot10^3$, 
the bound (\ref{e:upper}) is thus irrelevant, indeed, 
$[1-(1-\lambda)^n]^t=[1-0.15^{10^4}]^{5 \cdot10^3}\cong 1$.
It is reasonable to suppose that for that choice of the parameters,  
the system is essentially in the \emph{infinite volume} regime, 
where the probability of flipping to zero at the same time a large number 
of cells is negligible.

The problem of the relaxation time has been treated at a high level 
of generality, namely, we considered any NDECA with $f_2$ an even map. 
In this perspective
it is not possible to be more precise about the behavior of the 
relaxation time with respect to the volume $n$ of the system. 
On the other hand, considering particular NDECA one can prove 
more precise statements, as we discuss in the following subsections.

\subsection{The case of the identity: logarithmic behavior of 
the relaxation time}
\label{s:tc-204}
\par\noindent
Consider the NDECA 
with the map $f_2$ being the identity, namely, the rule 204(11001100): 
as mentioned in Section~\ref{s:modello}
this rule associates to any configuration 
the state of the cell at the center of the neighborhood, namely, the 
cell that one is going to update. 
Cells are thus updated independently one from each other but not on 
the past. The chain can be described as a collection of $n$ 
single cell Markov chains evolving with the transition matrix
\begin{displaymath}
q(0,0)=1,\;
q(0,1)=0,\;
q(1,0)=1-\lambda,\;
q(1,1)=\lambda,
\end{displaymath}
for any cell $i\in\mathbb{L}_n$.
The stationary measure is product and the single 
cell stationary measure is concentrated on $0$, namely, 
$\pi_i(0)=1$ and $\pi_i(1)=0$. 
Moreover, for a single cell started at $1$, the probability 
that its state is $0$ at time $t$ is equal to $1-\lambda^t$, namely, $1$ minus 
the probability that from time $1$ to time $t$ it has always been 
sampled the rule $f_2$. 
Hence, if we look at the whole chain, 
we have\footnote{This model can be solved also by using multinomial 
distributions. One can sum over all the ways in which 
$1$'s are removed. If $s_k$ is the number of ones removed at time 
$k$ we have 
\begin{displaymath}
\mu_t^1(0)
=
\sum_{s_1+\cdots+s_t=n}
\left(\newatop{n}{s_1}\right)
\left(\newatop{n-s_1}{s_2}\right)
\cdots
\left(\newatop{n-s_1-\cdots-s_{t-1}}{s_t}\right)
(1-\lambda)^{s_1}\lambda^{n-s_1}
\cdots
(1-\lambda)^{s_t}\lambda^{n-s_1-\cdots-s_t}
.
\end{displaymath}
Expanding the binomials and 
distributing the $\lambda$ and $1-\lambda$ terms we get 
\begin{displaymath}
\mu_t^1(0)
=
\lambda^{nt}
\!\!\!\!\!\!
\sum_{s_1+\cdots+s_t=n}
\!\!
\frac{n!}{s_1!\cdots s_t!}
(1-\lambda)^{\sum_{k=1}^ts_k}
\lambda^{-t s_1}
\lambda^{-(t-1) s_2}
\cdots
\lambda^{-s_t}
=
\lambda^{nt}
\!\!\!\!\!\!
\sum_{s_1+\cdots+s_t=n}
\!\!
\frac{n!}{s_1!\cdots s_t!}
\prod_{k=1}^t
\Big(
    \frac{1-\lambda}{\lambda^{t-k+1}}
\Big)^{s_k}
\end{displaymath}
Exploiting the multinomial theorem we get
\begin{displaymath}
\mu_t^1(0)
=
\lambda^{nt}
\bigg(
\sum_{k=1}^t
     \frac{1-\lambda}{\lambda^{t-k+1}}
\bigg)^n
=
\frac{\lambda^{nt}(1-\lambda)^n}{\lambda^{n(t+1)}}
\bigg(
\sum_{k=1}^t
     \lambda^k
\bigg)^n
=
\frac{(1-\lambda)^n}{\lambda^n}
\Big(
     \frac{1-\lambda^{t+1}}{1-\lambda}-1
\Big)^n
=(1-\lambda^t)^n
.
\end{displaymath}
}
that 
$\mu_t^1(0)=(1-\lambda^t)^n$. 
Now, suppose to compute this probability on a time scale 
diverging logarithmically with $n$, namely, take $t(n)=\alpha\log n$
for some $\alpha$ such that $\alpha>-1/\log\lambda$:
\begin{displaymath}
\mu_t^1(0)
=
(1-\lambda^t)^n
\approx
e^{-n\lambda^{\alpha\log n}}
=
e^{-\exp\{(1+\alpha\log\lambda)\log n\}}
\to0
\end{displaymath}
in the limit $n\to\infty$.
Thus,
for any $\lambda\in(0,1)$, the NDECA under consideration will 
converge in probability to $0$, namely in this case the 
relaxation time is logarithmically large in $n$ (see
Figure~\ref{f:20}).

\begin{figure}[t]
\begin{center}
\includegraphics[height=4.5cm]{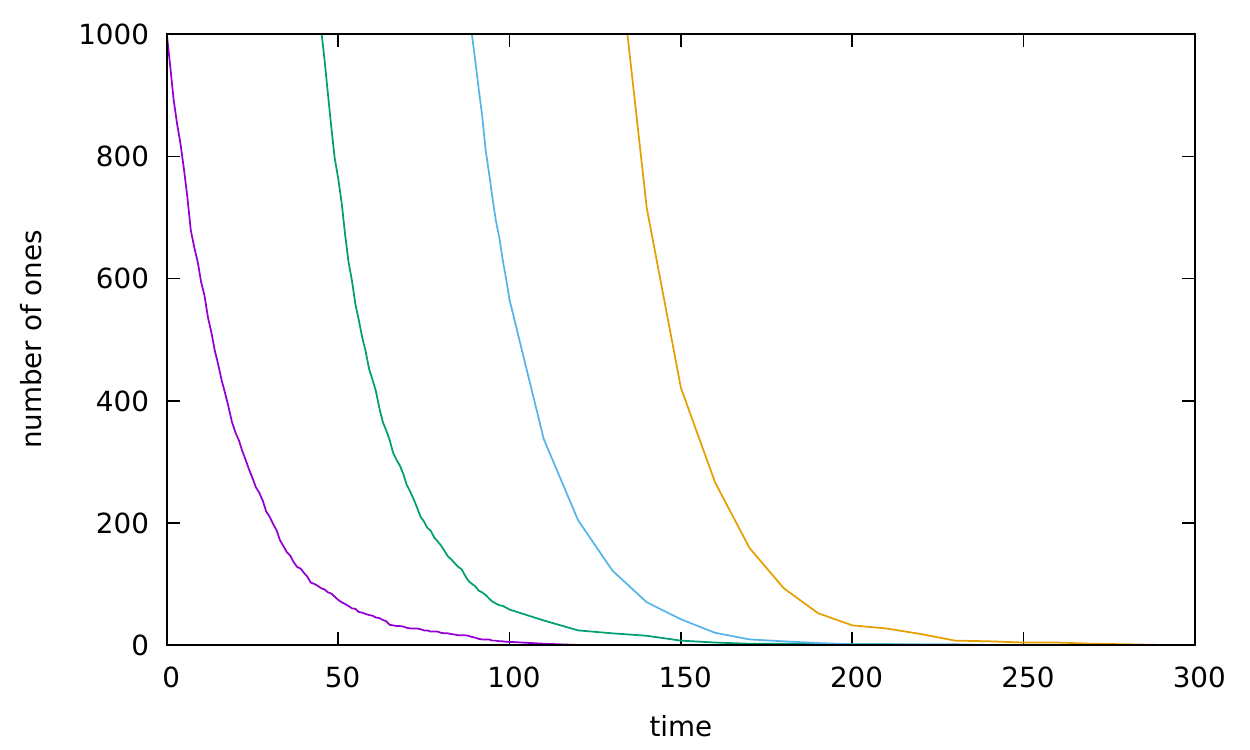}
\end{center}
\caption{(Color online) 
Relaxation of NDECA with map $f_2$ 204 (identity) 
starting form the 1 configuration
with
$\lambda=0.95$,
$n=10^3$ (violet), 
$n=10^4$ (green), 
$n=10^5$ (blue),
$n=10^6$ (orange);
the corresponding estimates $(-1/\log\lambda)\log n$ for the relaxation time
are
$134.7$, 
$179.5$, 
$224.4$,
$269.3$.
}
\label{f:20}
\end{figure}

\subsection{The case of the percolation PCA: exponential behavior 
of the relaxation time}
\label{s:tc-254}
\par\noindent
We consider, here, the NDECA with map 254 as map $f_2$, which, 
as discussed above, is an example of percolation PCA. 
In Figure~\ref{f:30} we have shown that for $\lambda$ small 
the relaxation time diverges logarithmically with $n$, whereas 
for $\lambda$ large this divergence is exponential. 
We have supported these conclusions with some analytical argument. 

Indeed, for the percolation PCA this result is proven rigorously 
in \cite[Theorem~2.1]{taggi}. In this paper the author
proves for the critical point the bound $\lambda_{\textup{c}}\ge 0.505$,
see the table in the Appendix therein. Moreover the 
Theorem~2.1 which provides an estimate for the average 
relaxation time, can be restated as follows.

\begin{theorem}
\label{t:taggi}
For the NDECA with map $f_2$ the ECA 254, 
let $\tau$ be the first hitting time to the state $0$ starting from the 
state $1$, then there exists $n_0\in\mathbb{N}$ and 
some positive constants 
$K_1$, $K_2$, $K_3$, $K_4$, $c_1$, $c_2$, $c_3$, and $c_4$
(dependent on $\lambda$) such that for all $n>n_0$
\begin{itemize}
\item[--]
if $\lambda<\lambda_{\textup{c}}$ then 
$K_1\log(c_1n)\le\mathbb{E}_1[\tau]\le K_2\log(c_2n)$;
\item[--]
if $\lambda>\lambda_{\textup{c}}$ then 
$K_3\,\textup{exp}(c_3n)\le\mathbb{E}_1[\tau]\le K_4\,\textup{exp}(c_4n)$;
\end{itemize}
where we denote by $\mathbb{E}_1$ the mean on the trajectories 
of the Markov chain started at $1$.
\end{theorem}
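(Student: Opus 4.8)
The plan is to exploit the well-known identification of the NDECA $254$ dynamics with oriented site percolation. Since $f_2$ is the rule $254$ and $f_1$ is the null rule, \eqref{mod_infi02} gives $p_i(1|x)=\lambda\,\mathbf{1}\{x_{i-1}\vee x_i\vee x_{i+1}=1\}$; introducing i.i.d.\ Bernoulli$(\lambda)$ variables $U_i^t$ ("the space-time site $(i,t)$ is open") we realise the chain so that $\xi_i^t=1$ precisely when there is an open oriented path (steps in $\{-1,0,+1\}$, one site per time level) from the fully occupied initial row $\mathbb{L}_n\times\{0\}$ to $(i,t)$. Thus $\{\xi^t=0\}$ is exactly the event that the oriented-percolation cluster of the full initial row has died out before level $t$, and $\lambda_{\textup{c}}$ is the critical parameter of this oriented percolation on $\mathbb{Z}$. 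Also $P_1(\tau>t)=P_1(\xi^t\neq0)$ is non-increasing in $t$ because $0$ is absorbing. The proof splits into the four inequalities, which need rather different tools.

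For the \emph{logarithmic lower bounds} on $\tau$ (valid in fact for every $\lambda\in(0,1)$) I would use "vertical paths": for each $i\in\mathbb{L}_n$ let $A_i$ be the event $U_i^1=\dots=U_i^t=1$. The $A_i$ are independent with $P(A_i)=\lambda^t$, and on $A_i$ cell $i$ stays equal to $1$ up to time $t$ (by induction, using that its own neighbourhood always contains the $1$ it carries), so $\{\tau\le t\}\subseteq\bigcap_i A_i^{\,c}$ and $P_1(\tau>t)\ge 1-(1-\lambda^t)^n$. Choosing $t\le\log n/\log(1/\lambda)$ makes $(1-\lambda^t)^n\le(1-1/n)^n\le e^{-1}$, hence $P_1(\tau>t)\ge 1-e^{-1}$, and summing over such $t$ gives $\mathbb{E}_1[\tau]=\sum_{t\ge0}P_1(\tau>t)\ge K_1\log(c_1 n)$. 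Symmetrically, the \emph{exponential upper bound} in the supercritical case is essentially free from Theorem~\ref{l:conv}: $\mathbb{E}_1[\tau]=\sum_{t\ge0}P_1(\tau>t)\le\sum_{t\ge0}[1-(1-\lambda)^n]^t=(1-\lambda)^{-n}$, i.e.\ $K_4\exp(c_4 n)$ with $c_4=\log\frac{1}{1-\lambda}$.

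The \emph{logarithmic upper bound} in the subcritical case is the first place where the true value of $\lambda_{\textup{c}}$ (rather than the crude $1/3$ of Theorem~\ref{t:rig00}) enters, and a naive first-moment bound on the number of $1$'s only works for $\lambda<1/3$. The plan here is to invoke the sharpness of the phase transition for oriented percolation: for $\lambda<\lambda_{\textup{c}}$ there is $\gamma=\gamma(\lambda)>0$ such that the cluster of a single site survives to level $t$ with probability at most $e^{-\gamma t}$, and by comparison of the finite cylinder $\mathbb{L}_n\times\mathbb{Z}_+$ with $\mathbb{Z}\times\mathbb{Z}_+$ this rate can be taken uniform in $n$. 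A union bound over the $n$ sites of the initial row then gives $P_1(\tau>t)=P_1(\xi^t\neq0)\le n\,e^{-\gamma t}$, whence $\mathbb{E}_1[\tau]=\sum_{t\ge0}P_1(\tau>t)\le\gamma^{-1}\log n+O(1)\le K_2\log(c_2 n)$.

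The genuinely hard part, and where I expect the main obstacle, is the \emph{exponential lower bound} in the supercritical case: one must show that, once $\lambda>\lambda_{\textup{c}}$, dying out within an exponentially long time is itself exponentially unlikely, uniformly in $n$ — something that cannot be read off from a single-site survival estimate. The plan is a renormalisation (block) argument in the spirit of Bezuidenhout--Grimmett: supercriticality yields a length scale $L$ and a $\delta>0$, as small as we wish, such that "$\delta$-good" space-time blocks of side $\sim L$ stochastically dominate from below a supercritical oriented site percolation of blocks with parameter $1-\delta$. The all-$1$ initial condition makes the whole ring of $m:=\lceil n/L\rceil$ blocks occupied at time $0$, and $\{\xi^t\neq0\}$ contains the event that this block percolation on the ring of $m$ blocks has not yet died out. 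A Peierls-type count of space-time dual contours — each of length at least $m$, each closed with probability at most a constant times $(1-\delta)^{m}$, and with at most $T$ possible time positions — gives $P_1(\tau\le T)\le T\,[C(1-\delta)]^{m}$, which for $\delta$ small and $T=e^{cn/L}$ is below $1/2$ for $n$ large; Markov's inequality then yields $\mathbb{E}_1[\tau]\ge \tfrac12\,e^{cn/L}$, i.e.\ $K_3\exp(c_3n)$ with $c_3=c/L$. The delicate points are the construction of the good-block event and the control of the contour sum; one could bypass the renormalisation by restricting to $\lambda$ close to $1$, where the contour estimate is direct, but covering the whole interval $(\lambda_{\textup{c}},1)$ requires importing the block construction from the oriented-percolation literature.
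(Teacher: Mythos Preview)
The paper does not prove this theorem at all: it is presented explicitly as a restatement of \cite[Theorem~2.1]{taggi}, and the only ``proof'' in the paper is the citation itself. So there is no argument in the paper to compare your proposal against.

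That said, your outline is a faithful high-level summary of how such a result is established and is in the spirit of \cite{taggi}: identification of the NDECA~254 with oriented site percolation, the elementary logarithmic lower bound via vertical open columns and the exponential upper bound via Theorem~\ref{l:conv} (both of which you carry out correctly), subcritical sharpness for the logarithmic upper bound, and a supercritical block/contour argument for the exponential lower bound. The one step you should not treat as automatic is the sentence ``by comparison of the finite cylinder $\mathbb{L}_n\times\mathbb{Z}_+$ with $\mathbb{Z}\times\mathbb{Z}_+$ this rate can be taken uniform in $n$'': periodic boundary conditions can create connections that are absent on $\mathbb{Z}$, so a naive monotone coupling does not give the domination in the direction you want, and one has to argue directly on the torus (or unfold with some care). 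This is a technical point and does not undermine your overall plan.
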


\section{Conclusions}
\label{s:conclusioni} 
\par\noindent
We have studied the possibility for diploid Elementary Cellular 
Automata (DECA) to exhibit phase transitions. 
In particular we have analyzed the case in which one of the two 
ECA mixed to obtain the DECA is the null rule. In such case we have 
called NDECA the DECA so obtained.

The problem has been approached via a MF approximation and 
through the use of the rigorous Dobrushin Criterion. 
The two methods have allowed two different classifications 
of NDECA. The two approaches give coherent results and, to some 
extent, explain and justify some of the numerical 
results discussed in \cite{Fautomata2017}.

As we have often repeated, the point of view followed in this paper, 
and mainly borrowed from \cite{Fautomata2017}, allow a 
unified approach to many different PCA which has been studied in the 
Probability and Statistical Mechanics literature 
putting them in a different light. In some dedicated sections, for 
the PCA that we have been able to spot in the past literature, 
we have compared our results with some rigorous statements 
already present in the literature.

\bigskip
\bigskip
\par\noindent
\textbf{Acknowledgements}
\par\noindent
The authors thank R.\ Fernandez and N. Fat\`es  for very useful discussions. 
ENMC expresses his thanks to the Mathematics Department of the 
Utrecht University for kind hospitality and STAR for financial support. The research of Francesca R.~Nardi was partially supported by the NWO Gravitation Grant 024.002.003--NETWORKS and by the PRIN Grant 20155PAWZB  \emph{Large Scale Random Structures}.

%\bibliographystyle{...}
%\bibliography{...}

\begin{thebibliography}{100}
\bibitem{BMM2013}A.\ Busic, J.\ Mairesse, I.\ Marcovici,
\textit{Probabilistic cellular automata, invariant measures, and perfect sampling},
Adv. in Appl. Probab. \textbf{45},  960-1980 (2013).
\bibitem{CM2010}
P.\ Chassaing, J.\ Mairesse, \textit{A non-ergodic probabilistic cellular automata with a unique invariant measure}, Stochastic Process. Appl. \textbf{125}, 2472--2487 (2010).
\bibitem{CNS2016}
E.\ N.\ M.\  Cirillo, F.\ R.\ Nardi, C.\ Spitoni,
\textit{Basic Ideas to Approach Metastability in Probabilistic Cellular Automata} in \textit{Probabilistic Cellular Automata: theory, applications and future perspectives}, Springer (2016).
\bibitem{Fautomata2017}
N.\ Fat\'es, 
\textit{Diploid Celluar automata: First Experiments on the 
Random mixtures of Two Elementary Rules},
Lectures Notes in Computer Science 10248, 97--108, 2017.
\bibitem{Dh83}
D.\ Dhar, \textit{Exact solution of a directed-site animals-enumeration problem in three dimensions},  Phys. Rev. Lett. \textbf{51}, 853-856 (1983).
\bibitem{Dpit1971}
R.L.\ Dobrushin, 
\textit{Markov Processes with a large number of locally interacting 
components: Existence of a limit process and its ergodicity.}
Problems Inform.\ Transmission \textbf{7}, 149--164 (1971).

\bibitem{Gv87}
H.\ A.\ Gutowitz, J.\ D.\ Victor, 
\textit{Local Structure Theory in More than
One Dimension}, Complex Systems, \textbf{1}, 57--68 (1987). 
\bibitem{JK2015} B.\ Jahnel, C.\ Kulske, \textit{A class of non-ergodic probabilistic cellular automata with unique invariant measure and quasi-periodic orbit}, Stoch. Process. Appl. 
\textbf{125}, 2427-2450 (2015).
\bibitem{LN2016}
P.\ Y.\ Louis, F.\ R.\ Nardi,
\textit{Probabilistic Cellular Automata: theory, applications and future perspectives}, Springer (2016).
\bibitem{MScmp1991}
C.\ Maes, S.\ Shlosman,
\textit{Ergodicity of Probabilistic Cellular automata: A Constructive 
Criterion.}
Communication in Mathematical Physics \textbf{135}, 233-251 (1991).
\bibitem{MM2014}
J.\ Mairesse, I.\ Marcovici,
\textit{Around probabilistic cellular automata}
Theoretical Computer Science \textbf{559}, 42-72 (2014)
\bibitem{M90}
 H.\ V.\ McIntosh, \textit{Wolfram’s Class IV and a Good Life}, Physica D \textbf{45},  105--121 (1990).
\bibitem{Me2011}
J.\ R.\ G.\ Mendonca,
\textit{Monte Carlo investigation of the critical behavior of Stavskaya’s probabilistic cellular automaton}
Phys. Rev. E \textbf{83}, 42-72 (2011)

\bibitem{taggi}
L.\ Taggi, 
\textit{Critical Probabilities and Convergence Time of Percolation 
Probbilistic Cellular Automata},
Journal of Statistical Physics \textbf{159}, 853--892, 2015.

\bibitem{T1990}
A.\ Toom, N.\ Vasilyev, O.\ Stavskaya, L.\ Mityushin, G.\ Kurdyumov, S.\ Pirogov, \textit{Discrete local Markov systems}, in: R.\ Dobrushin, V.\ Kryukov, A.\ Toom (Eds.), \textit{Stochastic Cellular Systems: Ergodicity, Memory, Morphogenesis}, Manchester University Press (1990).

\bibitem{T2004}
A. Toom,
\textit{Contours, convex sets and cellular automata}
 IMPA mathematical publications (2004)
\bibitem{W83}
S.\ Wolfram,
\textit{Statistical mechanics of cellular automata}, Rev. Mod. Phys. \textbf{35}, 601--644 (1983)
\bibitem{W84}
S.\ Wolfram,
\textit{Computation theory of cellular automata}, Comm. Math. Phys.  \textbf{96}, 15--57 (1984)

\end{thebibliography}

\end{document}